\documentclass[runningheads]{llncs}

\usepackage{amsmath}
\usepackage{amssymb}
\usepackage{epsf}
\usepackage{graphics}
\usepackage{epsfig}
\usepackage{latexsym}
\usepackage{enumerate}
\usepackage[normalem]{ulem}
\usepackage{color}
\usepackage{float}
\usepackage{graphicx}
\usepackage[loose]{subfigure}
\usepackage{cite}
\usepackage{algorithm}
\usepackage{algorithmic}
\usepackage{epstopdf}
\usepackage{appendix}
\usepackage{placeins}

\usepackage[pagewise]{lineno}
%\linenumbers

\newcommand{\edge}[2]{(#1,#2)}
\newcommand{\seppair}[2]{\langle#1,#2\rangle}

\def\O(#1){\ensuremath{\mathcal{O}(#1)}}

\begin{document}

\title{Book Embeddings of $k$-Map Graphs
\thanks{Supported by the Deutsche Forschungsgemeinschaft (DFG), grant
    Br835/20-1.}}
\titlerunning{ }
\author{Franz J. Brandenburg}
\authorrunning{Franz J. Brandenburg}
\institute{University of Passau, 94030 Passau, Germany \\
  \email{brandenb@informatik.uni-passau.de}}

\maketitle

\begin{abstract}
 A \emph{map} is a partition of the sphere into regions that are labeled as
  countries or holes.    A \emph{map graph} has the countries of
a map as its vertices
and there is an edge if and only if the countries are adjacent
  and meet in at least one point. For a $k$-\emph{map graph}, at most $k$ countries
  meet in a point. A graph is $k$-\emph{planar} if it can be drawn in the plane
  with  at most $k$ crossings per edge.

  A $p$-page \emph{book embedding} of a graph is a linear ordering of the
  vertices and an embedding of the edges to $p$ pages, such that there is no
  conflict  in any page, that is any two embedded edges do not twist or cross.
The \emph{book thickness} of a graph
is the minimum number of pages in all book embeddings.

  We show that any $k$-map graph with $n$ vertices admits a book embedding in $6\lfloor
k/2 \rfloor+5$ pages, that can be computed in $O(kn)$ time from its map.
  On the other hand, there are $k$-map graphs that
  need $\lfloor 3k/4 \rfloor$  pages.
  In passing, we obtain an improved upper bound of eleven pages for 1-planar
  graphs   and of 17 pages for optimal 2-planar graphs.
\end{abstract}

\iffalse
\begin{keyword}
   topological graphs \sep book embedding \sep map graphs
%%%\texttt{elsarticle.cls}\sep \LaTeX\sep Elsevier \sep template
%%% \MSC[2010] 00-01\sep  99-00
\end{keyword}

\fi

\section{Introduction} \label{sec:intro}
\iffalse % not arxiv

A  $p$-page \emph{book embedding}  of a graph consists of a
\emph{linear ordering} of the vertices, which is defined by placing
them from left to right, and an \emph{embedding}  of the edges in
$p$ pages, such that there is no \emph{conflict} between edges in a
page. For two vertices $u$ and $v$, let $u<v$ if $u$ precedes $v$ in
the linear ordering and let $u\leq v$ if $u<v$ or $u=v$. For sets of
vertices $U$ and $W$ let $U<W$ if $u<w$ for every $u \in U$ and $w
\in W$. An edge $\edge{u}{w}$ \emph{spans} vertex $v$ if $u < v <
w$. Two edges $\edge{u}{v}$ and $\edge{x}{y}$ with $u<v, x<y$ and $u
\leq x$  are in \emph{conflict} and \emph{twist}  or \emph{cross} if
$u < x < v < y$. They \emph{nest} if $u \leq x < y \leq v$ and are
\emph{disjoint} if $u < v \leq x < y$. An \emph{interval} $[u,w]$
consists of all vertices $v$ with $u \leq v \leq w$.
Vertex $v$  is
\emph{outside} the interval if  $v \leq u$ or $v \geq w$. Thus the
vertices on the boundary are both  in and outside. Obviously, edges
do not twist if there is an interval such that both vertices of one
of them are in  and the vertices of the other edge are outside the
interval.
\fi % not arxiv

A  $p$-page \emph{book embedding}  of a graph consists of a
\emph{linear ordering} of the vertices, which is defined by placing
them from left to right, and an \emph{embedding}  of the edges in
$p$ pages, such that there is no  conflict  in any
page. For two vertices $u$ and $v$, let $u<v$ if $u$ precedes $v$ in
the linear ordering and let $u\leq v$ if $u<v$ or $u=v$.
If $u \leq x$, then two edges $\edge{u}{v}$ and $\edge{x}{y}$
\emph{twist}  or \emph{cross}   if
$u < x < v < y$. They \emph{nest} if $u \leq x < y \leq v$ and are
\emph{disjoint} if $u < v \leq x < y$. There is a \emph{conflict}
in a page if any two edges twist that are embedded in the page.
For sets of vertices $U$ and $W$ let $U<W$ if $u<w$ for all $u \in U$
and $w \in W$.
 An \emph{interval} $[u,w]$
consists of all vertices $v$ with $u \leq v \leq w$. Vertex $v$  is
\emph{outside} the interval if  $v \leq u$ or $v \geq w$. Thus the
vertices on the boundary are both  in and outside. Obviously, two edges
do not twist if there is an interval such that both vertices of one
of them are in  and the vertices of the other edge are outside the
interval. If $U$  is a set of vertices, then let $[U]$ be th interval
that contains exactly the vertices of $U$. For $w < U$ let $[w,U]$
denote the interval   $w \leq v \leq u$ for $u \in U$.
An interval $[U,w]$ is defined accordingly.

The \emph{book thickness} of a graph $G$ is the minimum number of
pages in  all book embeddings of $G$. Book thickness is also known as
  stacknumber or pagenumber \cite{dw-llg-04,hls-cqsmlg-92}.
It has been  been studied in areas such as  Graph
Theory, Graph Algorithms, and Graph Drawing.   The book thickness
of $n$-vertex graphs with $m$ edges is at most $\sqrt{m}$
\cite{m-edgepage-94} and at least $\lceil \frac{m-n}{n-3} \rceil$
\cite{bk-btg-79}.
%However, any graph has a subdivision that can be
%embedded in three pages \cite{bk-btg-79}.
The complete graph $K_n$ has book
thickness $\lceil n/2 \rceil$ \cite{bk-btg-79}.
 Every nondiscrete  outerplanar graph has
 book thickness one. A graph has book thickness at most two
  if and only if it is a subgraph of a planar graph with a Hamiltonian
cycle \cite{bk-btg-79}.  Every  planar graph  has book thickness at
most four. The upper bound has been shown  by Yannakakis
\cite{y-epg4p-89} by a linear time algorithm that
 constructs a 4-page book embedding of a planar graph. Recently,
  Bekos et al.~\cite{kbkpru-4pages-20} and  Yannakakis \cite{y-4pages-20}
 have  shown that some planar graphs need four pages, so that the bound is tight.

There are several approaches to extend the planar graphs, for
example by drawings on surfaces of higher genus \cite{gt-tgt-87},
forbidden minors \cite{d-gt-00},   drawings in the plane with
restrictions on crossings \cite{dlm-survey-beyond-19}, or
generalized adjacency relations \cite{cgp-mg-02}.
   Graphs with bounded genus have constant book thickness \cite{m-genuspage-94}.
Also minor-closed graphs, e.g., graphs with constant tree-width,
have constant book thickness \cite{dw-gtgtp-07}.
 A graph is $(g, k)$-\emph{planar}  if it can be drawn on a surface of Euler genus at
most $g$ with at most $k$ crossings per edge
  \cite{df-stackqueue-18}. Clearly, $(0, 0)$-planar graphs are
the planar graphs  and $(0, k$)-planar graphs are known as
$k$-\emph{planar graphs} \cite{pt-gdfce-97}. An $n$-vertex
$(g,k)$-planar graph with fixed $g$ and $k$ has book thickness
$O(\log n)$  % as shown by   Dujmovi\'{c} and Frati
\cite{df-stackqueue-18}.
 For $k$-planar graphs, this improves the $O(\sqrt{n})$ bound from
\cite{m-edgepage-94} to $O(\log n)$.    Bekos~et
al.~\cite{bbkr-book1p-17} have shown that the book thickness of 1-planar
graphs is constant and that 39 pages suffice.

Recently, Bekos et al.~\cite{bdggmr-benpsf-20,bdggmr-benpsf-20-a, bdggmr-benpsf-22-b}
have introduced $k$-framed graphs that consist of
a planar graph with faces of degree at most $k$, such that
there are crossed edges in the interior of each faces. They
allow   crossed multi-edges but no multi-edges in the frame. The
latter admit smaller faces, for example for 1-planar graphs. A
framed multigraph is \emph{maximal} if every face $f$ of degree $k$
induces a $k$-clique by the edges in the boundary and the crossed
edges in the interior of $f$.
Bekos et al.~state
  a bound of $6\lceil k/2 \rceil+5$ pages for any $k$-framed
graph, which shall be raised to $6\lceil k/2 \rceil+7$ to correct an error
in the earlier versions\footnote{personal communication by S. Griesbach}.

A    $k$-\emph{map}
is a partition of the sphere  into disc homeomorph regions that are labeled as countries or
holes, such that at most $k$ countries meet in a point.
It generalizes planar duality by holes and  an
adjacency in a point. The latter admits large cliques.   A $k$-map
 defines a $k$-\emph{map graph}  with the countries as vertices
and an edge if and only if two  countries meet in at least one
point.   Map graphs have been introduced by Chen et al.~\cite{cgp-mg-02},
who  have shown that any $n$-vertex $d$-map graph has at most $k(n-2)$
edges \cite{c-edges-kmap-04} and admits a clique  of size $\lfloor 3k/2
\rfloor$. If $k$ is small, then $k$-map graphs are related to
$k$-planar graphs. Chen et al.~\cite{cgp-mg-02} have observed that the 2-
and 3-map graphs are the planar graphs.  The 4-map graphs are the
kite-augmented 1-planar graphs \cite{b-4mapGraphs-19}  and the 5-map
graphs are the clique-augmented 2-fan-crossing graphs
\cite{b-5maps-19}. A graph is kite-augmented 1-planar if it has a
drawing such that every edge is crossed at most once and a pair of
crossed edges induces a $K_4$. A graph is clique-augmented
2-fan-crossing if every edge is crossed at most twice. Moreover, if
an edge is crossed by two edges, then the crossing edges are
incident to a common vertex, such that there is a $K_5$ induced by the
vertices of the edges involved in the crossings.
Bekos et al.~\cite{bdggmr-benpsf-20,bdggmr-benpsf-20-a, bdggmr-benpsf-22-b}
have shown that any $k$-map graph is a subgraph of a $2k$-framed graph.
Hence, one  obtains an upper bound of $6k+7$ for the book thickness of $k$-map
graphs by their approach.
  Map graphs are simple, but their representations
allow  multi-edges.  We show that a graph is the simplification of a
maximal $k$-framed multigraph if and only if it is a $k$-map graph.
Hence, $k$-framed multigraphs and $k$-map graphs have the same book
thickness, since multi-edges don't matter for book embeddings.\\

\noindent \textbf{Our contribution.}
We establish improved upper an lower bounds on the book thickness of $k$-map graphs.
The lower bound of  $\lceil k/2
\rceil$ is raised to $\lfloor 3k/4 \rfloor$ using larger cliques.
For the upper bound, we first show that any $k$-map graph is a
$k$-framed multigraph consisting of a planar multigraph with faces of
degree at most $k$. Then we use a modification of Yannakakis algorithm
for the embedding of planar graphs.
 We introduce block-expansions as a new method for the computation of the vertex
ordering. The  embedding of edges is done in two phases.
For any  2-level framed multigraph, first,
only the edges of the outer cycle, the inner edges, and
all edges that are incident to the first outer vertex of a face are
embedded in three pages. The remaining edges of any face of degree $k$ are embedded in
a set of at  most $ \lfloor \frac{k}{2} \rfloor$ pages.
There is an outerplanar and consistent  face-conflict graph, such that two remaining edges
from any two faces do not twist if the  faces have the same color. As outerplanar
graphs are 3-colorable, any
  2-level $k$-framed multigraph  can be embedded in $3 \lfloor
k/2 \rfloor +3$ pages. Twice this number suffices for $k$-framed
multigraphs, where one page can be saved as in Yannakakis
5-page algorithm \cite{y-epg4p-89}, so that we obtain an upper bound
of  $6 \lfloor k/2 \rfloor +5$.
  In addition, we show that the
book thickness of 1-planar graphs is at most eleven.

In the remainder of this paper, we introduce basic notions in
Section~\ref{sec:prelim} and establish the relationship between
framed multigraphs and map graphs. The book embedding of 2-level
framed multigraphs is described in Section~\ref{sec:2-level}. We
study the  composition and applications  in
Section~\ref{sect:application} and state some open problems in
Section~\ref{sec:conclusion}.

\section{Preliminaries}  \label{sec:prelim}
We consider  undirected multigraphs $G = (V,E)$ with sets of $n \geq
1$ vertices $V$ and edges $E$, some of which have multiple copies.
Self-loops are excluded.
An undirected edge is denoted by a pair $\edge{u}{v}$, since it is
later oriented from $u$ to $v$. A graph is $k$-\emph{planar} ($k
\geq 0$) if it admits a drawing (or a topological embedding) such that
every edge is crossed at most $k$ times. Clearly, 0-planar graphs
are \emph{planar}. A planar drawing has \emph{faces}, which are
hole-free regions if the planar graph is connected. A face
is specified by the set $V(f)$  of \emph{vertices} in its boundary.
 If $V(f)$ contains $k \geq 2$ vertices, then $f$ has
\emph{degree} $k$ and is called a $k$-\emph{face}.
For convenience, we use (mixed) sets of vertices and edges
to specify a face, e.g., a vertex $a$ and an edge $\edge{b}{c}$ for
a triangle $(a,b,c)$.

 A \emph{multi-edge} between two vertices consists of several copies of
an edge, one of which is the \emph{original} or 0-\emph{copy}. All
but one copy  is removed if  a multigraph is \emph{simplified}.
A simplified multigraph is a (topologically) \emph{simple} graph.
Multi-edges shall be \emph{non-redundant}, such that there is no
2-face with two copies of an edge as its boundary. Hence, there are
vertices in the interior and the exterior of a 2-\emph{cycle} formed
by  two copies of an edge.
For convenience, we shall not distinguish between a planar
multigraph and a planar drawing (embedding), such that we speak of
vertices, edges, and faces of a multigraph.

\begin{figure}[ht]
  \centering
  \subfigure[]{
    \includegraphics[scale=0.8]{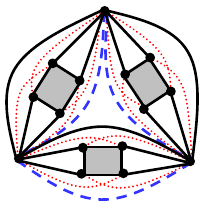}  % framedFig1
    \label{fig:Wconf}
    }
    \hspace{5mm}
 \subfigure[]{
    \includegraphics[scale=0.8]{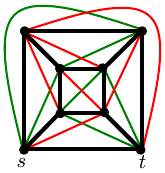}
    \label{fig:crossed-cube}
    }
\hspace{5mm}
    \subfigure[]{
    \includegraphics[scale=0.5]{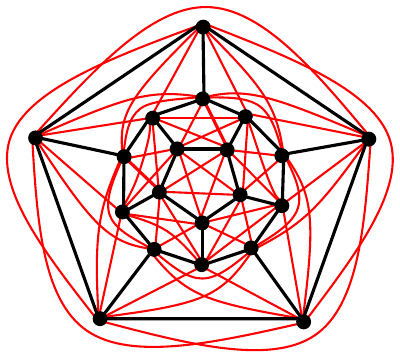}
    \label{fig:Xdodecaeder}
    }
\caption{(a) A 4-framed multigraph, that is a 1-planar graph
consisting of three W-configurations. The graph is a 7-framed graph
if two copies of edges (drawn blue and dashed) and the crossed edges
are removed from the inner
 face.
(b) The 1-planar crossed cube. (c) The 5-framed crossed dodecahedron
graph.
  }
  \label{fig:1planar}
\end{figure}

There is a close relationship between $k$-map graphs and $k$-framed
graphs, that consist of a planar frame with faces of degree at most $k$
and of a set of crossed edges in the interior of each face.
Bekos et al.~\cite{bdggmr-benpsf-20,bdggmr-benpsf-20-a, bdggmr-benpsf-22-b}.
  have shown that any $k$-map graph is a subgraph of a maximal $2k$-framed graph.
We show that multi-edges help to obtain smaller faces.
Multiple adjacencies between countries are
natural for $k$-maps.
Chen et al.~\cite{cgp-mg-02} have shown that any $k$-map graph
admits a representation by a planar graph
\cite{cgp-mg-02}. Let $W=(V, P, L)$ be a planar bipartite graph,
whose first set  is in one-to-one correspondence to
the set of vertices of a graph $G$.
Each vertex of the second set $P$ is called a \emph{point},
that is used to establish edges. Set $L$ consists of 2-sets $\{v,p\}$ with
$v \in V$ and $p \in P$, called a \emph{link}.
The \emph{half-square} of  $W$ is a graph $G = H^2(W)$
with vertex set $V$
such  that there is an edge $\edge{u}{v}$ in $G$ if and only if
there is a point $p$ and links $\edge{u}{p}$ and $\edge{v}{p}$ in
$W$. Then $W$ is called a \emph{witness} of $G$.
Graph $W$ is a $k$-witness if any point has degree at most $k$.
In particular, there are points of degree two, called \emph{2-points}.
If $p$ is a 2-point of $W$ with links $\edge{u}{p}$ and $\edge{v}{p}$,
then $p$ subdivides edge $\edge{u}{v}$ of $G$.
 Conversely, there is an
edge contraction at a 2-point in the half-square.

Chen et al.~\cite{cgp-mg-02} have shown that there
 is a one-to-one correspondence
between  points in maps and witnesses, that leads to the
following characterization.

\begin{proposition}
A graph $G$ is a $k$-map graph if and only if
$G=H^2(W)$ for a planar $k$-witness $W$
\end{proposition}

We wish to normalize any $k$-map graph, similar to a triangulation of a planar graph.
It is often easier to work with normalized graphs than with general ones.
We do so by adding multi-edges that shall be uncrossed in a drawing.
Note that an edge $\edge{u}{v}$ of a $k$-map graph is uncrossed
if the countries for $u$ and $v$ in a map meet in a segment and not just in a point.

 A witness $W$ is called \emph{planar-maximal}
if (1) each face  in  a planar drawing of $W$  is a  quadrangle or a hexagon, (2)
there is a 2-point $p'$ with links $\edge{u}{p'}$ and $\edge{p'}{v}$
if $p$ is a point of degree   $d \geq 3$, such that $u$ and $v$ are
consecutive neighbors at $p$, and (3)
 there are no quadrangular faces with two vertices and two
2-points in the boundary.  Note that the added 2-points   are redundant in the
sense of \cite{cgp-mg-02}, since they define edges that are defined
by $p$. The \emph{planar skeleton} $\mathcal{P}(W)$ is the subgraph
of a planar-maximal witness, in which all $d$-points for $d\geq 3$ are
removed.
A $k$-map graph $G$ is \emph{planar-maximal} if $G=H^2(W)$ for a
planar-maximal witness $W$ with $d$-points for $d \leq k$. Its
\emph{planar skeleton} $\mathcal{P}(G)$ is $H^2(\mathcal{P}(W))$,
which is a planar multi-graph with multiple copies of an edge
$\edge{u}{v}$, one for each 2-\emph{path} $(u,p,v)$ in the  planar
skeleton of $W$ consisting of a 2-point and two links. This is
relevant for the definition of faces if there are separation pairs,
and it extends $k$-framed graphs. Multiple copies of an edge are
ignored for the book embedding. The restriction to quadrangles and
hexagons implies that $G$ is a hole-free map graph, that is, graph $G$ admits
 a map without holes \cite{cgp-rh4mg-06}.
Then $G$ is 2-connected \cite{cgp-rh4mg-06}.
A point of a witness is \emph{redundant}  if all pairs of its
neighbors can also be connected through other points
\cite{cgp-mg-02}. In particular,  there may be many 2-points
connecting two vertices $u$ and $v$. Then the size of the set of
points of a witness is no longer related to the size of its set of
vertices. This resembles the situation of multi-edges in (planar)
graphs. We avoid this situation by the exclusion of \emph{duplicate}
2-points. Two 2-points are duplicates if there is a quadrangular
face with two 2-points and two vertices. Chen et
al.~\cite{cgp-mg-02} have shown that a witness without redundant
points has at most $3n-6$ points and  $O(kn)$ edges, which was
improved to $kn-2k$ \cite{c-edges-kmap-04}, if it has $n$ vertices
and points of degree at most $k$. Hence there are $O(kn)$ 2-points
if duplicates are excluded. Thus we can assume that a planar-maximal
witness for an $n$-vertex $k$-map graph has $O(kn)$ points,
 and that the (planar-maximal) half-square has $O(kn)$ edges. Note
 that $k = (n-1)/2$ if $G$ is an $n$-clique.

There is a \emph{normal form} for planar-maximal $k$-map graphs,
which generalizes the normal form for 1-planar graphs by Alam et
al.~\cite{abk-sld3c-13}. It is obtained via its witness, that is
augmented similar  to a triangulation of a planar graph.

\begin{lemma} \label{lem:normalform}
(i) If $G$ is a planar-maximal $k$-map graph, then there
 is a    planar graph $G'$ with multi-edges and $d$-faces with
$d \leq k$,  such that $G$ is obtained by expanding each $d$-face to
  a $d$-clique, and then removing multi-edges.

(ii) For any $k$-map graph $G=(V,E)$ there is a planar-maximal $k$-map
supergraph $G'=(V', E')$  with $V' =V$ and $E \subseteq E'$,  that can be constructed in
linear time.
\end{lemma}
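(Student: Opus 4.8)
The plan is to work entirely on the witness side, mirroring how one triangulates a planar graph but adapting it to the bipartite witness structure. For part (i), I would start from a planar-maximal witness $W$ with $G = H^2(W)$, whose faces are quadrangles and hexagons and whose $d$-points ($d\geq 3$) have been ``shadowed'' by $2$-points connecting consecutive neighbours. The key observation is that each $d$-point together with its $d$ links spans exactly the vertex set of a face of the planar skeleton $\mathcal P(W)$; conversely, every face of $\mathcal P(W)$ of degree $\geq 3$ (counting $2$-point subdivisions) either already carries such a $d$-point or is a face whose half-square is a single edge or triangle. So I would let $G'$ be the planar multigraph obtained from $\mathcal P(W)$ by contracting every $2$-point subdivision (replacing a $2$-path $(u,p,v)$ by an edge $\{u,v\}$), keeping parallel copies that arise from components at separation pairs. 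Then each inner face of $G'$ has degree $d\leq k$ because it came from a $d$-point, and expanding each such $d$-face to a $d$-clique and discarding the multi-edges reconstructs exactly the adjacencies of $G = H^2(W)$: an edge $\{u,v\}$ of $G$ exists iff $u,v$ share a point in $W$, i.e.\ iff they lie on a common face of $G'$. The one place needing care is that the outer face of $G'$ must also have bounded degree — this is where I would invoke planar-maximality (restriction to quadrangles/hexagons, hence hole-free, hence $2$-connected by \cite{cgp-rh4mg-06}) to argue the outer face is treated symmetrically, or else simply state $G'$ is drawn on the sphere so every face is ``inner.''

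For part (ii), given an arbitrary $k$-map graph $G$ with witness $W$ (points of degree $\leq k$), the goal is to add links and merge/move points so that the resulting witness $W'$ is planar-maximal while $H^2(W')\supseteq G$ on the same vertex set. The procedure, following Alam et al.~\cite{abk-sld3c-13} in spirit, is: (1) delete redundant and duplicate $2$-points (these do not change the half-square); (2) for each $d$-point with $d\geq 3$, insert the shadow $2$-points between consecutive neighbours, as required by condition (2) of planar-maximality; (3) as long as some face of the current planar skeleton is neither a quadrangle nor a hexagon, add a link or a new point to split it, exactly as one adds a chord to triangulate a large face of a planar graph — a face that is ``too big'' can be reduced by connecting a vertex on its boundary to a new $2$-point placed toward another boundary vertex, or by merging into an existing incident $d$-point; (4) eliminate quadrangular faces with two $2$-points and two vertices (condition (3)) by contracting one of the $2$-points. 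Each step only adds points/links, so $H^2$ only gains edges, never loses a vertex, giving the supergraph property.

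The main obstacle I anticipate is step (3): showing that one can always reduce a face that is too large \emph{without} creating a point of degree exceeding $k$, and without destroying $2$-connectivity or planarity of the witness. In the planar-triangulation analogue this is automatic, but here adding a link to a $d$-point raises its degree to $d+1$, which may break the $d\leq k$ bound; so the augmentation must sometimes introduce a fresh $2$-point or $3$-point rather than extend an existing one, and one must check the face it splits off is itself a quadrangle or hexagon or can be recursively handled. I would handle this by a careful case analysis on the face size and on whether the incident points already have degree $k$, arguing that the ``worst'' face — a hexagon bounded by six distinct vertices (or the $2$-point-subdivided boundary of a $k$-face) — is already in normal form, so the recursion strictly decreases a potential such as $\sum_f(\deg f - 6)^+$. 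Linear-time constructibility then follows because each of the four steps touches each point/link a constant number of times, and by \cite{cgp-mg-02,c-edges-kmap-04} there are only $O(kn)$ points and $O(kn)$ links to process; the augmentation adds only $O(kn)$ further objects.
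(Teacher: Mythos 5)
Your proposal is correct and follows essentially the same route as the paper: part (i) via the planar skeleton of a planar-maximal witness and the correspondence between $d$-points and the faces they leave behind, and part (ii) by augmenting the witness (splitting oversized faces with fresh $2$-points, adding the shadow $2$-points around points of degree at least three, removing duplicates) in linear time using the $O(kn)$ bounds on points and links. The obstacle you anticipate in step (3) does not really arise, since — as you yourself note and as the paper does — large faces are only ever partitioned by $2$-paths through \emph{new} $2$-points, which never increases the degree of any existing point.
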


\begin{proof}
(i) There is a planar-maximal witness $W$ such that  $G = H^2(W)$ and
$G'=H^2(\mathcal{P}(W))$ is a planar multi-graph, which is
2-connected, since the faces of $W$, including the outer face,  are
quadrangles or hexagons \cite{cgp-rh4mg-06}. Graph $\mathcal{P}(W)$
has only 2-points. There is a one-to-one correspondence between
$d$-faces of $G'$ and $2d$-faces of $\mathcal{P}(W)$, since
$\mathcal{P}(W)$ does not admit quadrangular faces with two
2-points. For any $2d$-face of $\mathcal{P}(W)$ there is a
$d$-point that is adjacent to the vertices of the face. Hence, it
creates a $d$-clique in the $d$-face of $G'$. By assumption, $k$-map
graphs are simple, such that there are no multi-edges.

(ii) Assume that $W$ is a witness without redundant points. Then
it has $O(n)$ points \cite{cgp-mg-02}. A planar-maximal augmentation
$W^+$ of a witness $W$   can be constructed in linear  time in the
size of the set of vertices of $W$ if there are no  duplicate
2-points.
 For the construction of $W^+$, first partition
the $d$-faces of $W$ with $d \geq 8$ by 2-paths such that only
$d'$-faces with $d' \leq 6$ remain. This operation creates new links
for the half-square, including new edges and thereby multiple copies of an
edge for $G$. It
  generalizes the triangulation of planar graphs and the
augmentation of   1-planar graphs to    kite-augmented ones
\cite{b-4mapGraphs-19}. Then add a 2-point $p'$ and links
$\edge{u}{p'}$ and $\edge{p'}{v}$ if vertices $u$ and $v$ are
consecutive neighbors at point $p$ if $p$ has degree at least three.
Links $\edge{u}{p'}$ and $\edge{p'}{v}$ can be routed close to the
2-path $(u,p,v)$, such that they are uncrossed. This creates
multi-edges in the half-square if there are different routes for the
2-paths. Finally, remove duplicate 2-points by merging  2-points in
quadrangles with two vertices. Clearly, $H^2(W^+)$
is a supergraph  of $H^2(W)=G$. Clearly, all taken steps can be done in linear
time in the size $n$ of the set of vertices of $W$ if $W$ has no
duplicate 2-points.
  \qed
\end{proof}

A $k$-\emph{framed multigraph} $G$ consists of a \emph{frame} $F(G)$
and of sets of  \emph{crossed edges}. The frame is a spanning planar
subgraph of $G$ with nonredundant multi-edges. A face of  the frame is a
$d$-face with $3 \leq d \leq k$ that  contains a set of
\emph{crossed edges}.
The crossed edges are drawn in the interior of
the face, see Figure~\ref{fig:1planar}. There are also crossed
multi-edges if there are  copies in different faces. The \emph{set of edges}
$E(f)$ of face $f$ consists of the edges in the boundary of $f$ and
the crossed edges in its interior.
Any face has a distinguished vertex,
called its \emph{first outer vertex}, that is denoted by $\alpha(f)$.
So $E(f)$ is partitioned into the set
$E_{\alpha}(f)$ of edges incident to the first outer vertex and the
remainder $E^-(f)$. If $f$ is a $d$-face and $E(f)$ induces a
$d$-clique, then $E^-(f)$ induces a $(d-1)$-clique.

We assume that framed multigraphs are biconnected, since the book
thickness of a graph is the maximum book thickness of its
biconnected components \cite{bk-btg-79}. In addition, we assume that
the frame is  biconnected, which is useful later on.

\begin{lemma} \label{lem:2-connected}
For any biconnected $k$-framed multigraph $G$ there is a
$k$-framed multigraph $G'$ on the same set of vertices and with the
same set of crossed edges in each face, such that the frame of $G'$
is biconnected and is a planar supergraph of the frame of $G$.
\end{lemma}
\begin{proof}
If the frame of $G$ is disconnected, then a face $f$ has a hole with
an inner component $M$. There are crossed edges between vertices in
the boundary of $f$, in the outer face of $M$, and between vertices
of $f$ and $M$. If $M$ consists of a single vertex, then connect it
to an edge in the boundary of $f$ such that there is a triangle. Otherwise, consider
an edge $e$ in the outer face of $M$ and an edge $e'$ in the
boundary of $f$. Create an internally triangulated quadrangle with
$e$ and $e'$ on opposite sides. Then $M$ is biconnected to the
component with face $f$. Every crossed edge can be routed in the
interior of the new face $f'$, whose boundary consists of the
boundary of $f$ and the outer face of $M$.
 Finally, create a triangle with $v$, its predecessor in one component
and its successor in the other component if  there is a cutvertex
$v$ in the frame of $G$.
  \qed
\end{proof}

 For convenience, we assume that framed multigraphs are \emph{maximal}
such that any  face of degree $d$ induces a $d$-clique. Then there
may be crossed multi-edges  in the interior of  faces  that may not
be adjacent. Clearly, the drawing of a framed multigraph can be
augmented to a maximal one, first by establishing  2-connectivity of
the frame and then  by filling the interior of each face such that
there is a clique. In addition, we assume that the outer face is a
triangle (or that there are no crossed edges in the outer face),
which is obtained as before when establishing  2-connectivity in
Lemma~\ref{lem:2-connected}. However, there are no crossed edges
incident to the vertices of the outer triangle.

 A \emph{separation pair} $\seppair{s}{t}$ of graph $G$
is such that  $G-\{s,t\}$ partitions
into at least two connected components. It is an \emph{inner
separation pair} if vertex $t$ (or $s$) is not in the outer face of a given
drawing of $G$. A component without vertices in the outer face is
called an \emph{inner component}. In general, there are several inner
components that share exactly vertices $s$ and $t$.

 Note that the book thickness of a graph
is bounded by the book thickness of any augmentation by vertices and
edges. Hence, we consider maximal $d$-framed multigraphs for our
study of an upper bound on the book thickness.\\

There is a close relationship between  framed multigraphs  and  map
graphs.

\begin{theorem} \label{thm:map-multiframe}
Any $k$-map graph is the simplification of a maximal $k$-framed
multigraph.
\end{theorem}
\begin{proof}
Chen et al.~\cite{cgp-mg-02} have shown that a graph $G$ is a $k$-map graph
if and only if it is the half-square of a $k$-witness $W$ such that
$G=H^2(W)$.
 A witness admits the construction of a frame as
follows. Consider a planar drawing of $W$.
For any point  $p$ of $W$,  add  a cycle of 2-paths
around $p$. A 2-path consist of a point $t$ of degree two and edges
$\edge{u}{t}$ and $\edge{t}{v}$ for vertices $u$ and $v$ that
  are consecutive at $p$. There is a multi-edge between $u$ and $v$ if
  there are 2-paths around several points. We assume that there is no face
in a drawing of the augmentation of $W$ containing two 2-points (and
two vertices), similar to nonredundant multi-edges. By the
half-square there is an uncrossed edge in $G$ from every 2-path in
$W$. Hence, every $k$-point for $k \geq 3$ is in a face formed by
the 2-paths of its neighbors, which is a $k$-face in $F(G)$. It
defines a $k$-clique. Every other edge of $G$ is a crossed edge in
the interior of a face of $F(G)$, that is the edge is created by the
2-path between two neighbors of a point.
  \qed
\end{proof}

\begin{lemma} \label{lem:multiframe-map}
The simplification of any maximal $k$-framed multigraph is a $k$-map
graph.
\end{lemma}
\begin{proof}
Construct a $k$-witness $W$ from a maximal $k$-framed multigraph $G$
as follows.
First, subdivide every edge of the frame by a point, which it taken
as a 2-point of $W$.  Then add a
$k$-point in each $k$-face and connect it to the vertices in the
boundary. This creates a $k$-clique for each face of degree $k$,
which is feasible, since $G$ is maximal.  Clearly, any edge of the frame
is represented in $W$ by the 2-path with the added 2-points, and conversely,
and any crossed edge in the interior of a face is represented via the inserted
$k$-point, and conversely. As  $k$-map graphs are simple, that is $H^2(W)$,
we must simplify the given maximal $k$-framed multigraph.
\qed
\end{proof}

\begin{corollary}  \label{cor:map=frame}
 Any simple subgraph of a  $k$-framed multigraph  is a subgraph of a $k$-map graph.
\end{corollary}

Note that a subgraph of a $k$-map graph is not necessarily a $k$-map
graph. In fact, the removal of an edge  from a 4-map graph may
result in a non 4-map graph, as shown  by Chen et
al.~\cite{cgp-mg-02}. This fact is due to the need for an
augmentation, such as kite-augmented 1-planar \cite{b-4mapGraphs-19}
and clique-augmented 5-planar graphs \cite{b-5maps-19}.

\section{Two-Level Graphs} \label{sec:2-level}

We recall basic notions from \cite{y-epg4p-89} and extend them for our needs.
Familiarity with Yannakakis approach for 2-level planar graphs will be helpful.
Basically, we traverse distinguished sets of blocks by Yannakakis nested method
and treat  them as a single X-block.

 The \emph{peeling technique}, introduced by Heath~\cite{heath1984embedding},
has been used in all later approaches on upper bounds for the book
thickness of generalized planar graphs \cite{ %abk-bt1pg-15,
bbkr-book1p-17,bdggmr-benpsf-20, df-stackqueue-18,y-epg4p-89}.
 It decomposes a graph into 2-level graphs and computes a \emph{leveling}  of the
vertices of a graph, such that there are layered separators
\cite{d-glls-15}. % Here multi-edges do not matter.
So the computation of a book embedding of a graph is reduced to that
of its 2-level subgraphs. The peeling technique generalizes
canonically to planar multigraphs.

The vertices in the outer face of a planar drawing are at level
zero.  Vertices are in level $\ell+1$ if they are in the outer face,
when all vertices at levels at most $\ell$ % and their incident edges
are removed. So there are no edges between vertices in levels $i$ and
$j$ if $|i-j| >1$ if the peeling technique is used, both for a planar
multi-graph and a $k$-map   or $k$-framed multigraph. In consequence,
the book embedding of such  graphs can be composed of the book embedding
of its 2-level subgraphs at odd and even levels, so that the book
thickness  of a  graph  is at most twice the book
thickness of its 2-level subgraphs.   %  \cite{y-epg4p-89}.

\subsection{Planar 2-Level Multigraphs}

\begin{figure}[t]
\centering
    \includegraphics[scale=0.8] {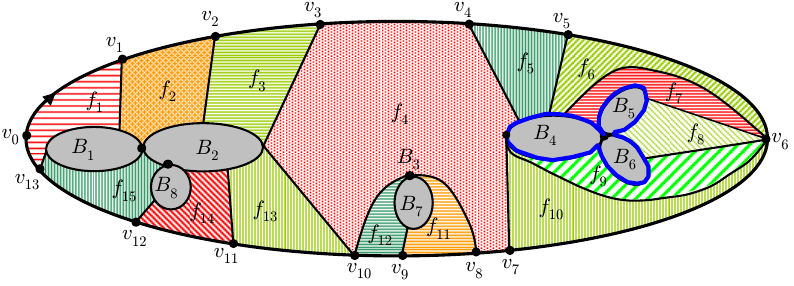}   % Fig-framed
  \caption{
  A planar 2-level multigraph with blocks $B_1, \ldots, B_{8}$
    including the  elementary block $B_3$ as the leader of $B_7$.
    Blocks $B_5$ and $B_6$ are covered by $v_6$, which is the last outer
vertex of the block-tree consisting of $B_4, B_5$ and $B_6$. Now $B_4$
is expanded to $B^*_4$   that includes $B_4, B_5$ and $B_6$.
 There are faces $f_1, \ldots, f_{15}$.
  It is assumed that more vertices are placed on the outer cycle and in the boundary
  of the blocks, and that each face contains a set of crossed edges.
 Bad  faces are drawn red   and good ones green.
  For example, $f_1$ is bad for $B_1$  and is in conflict with
  $f_2$ and $f_{15}$,
  $f_4$ is bad for $B_3$ and $B_4$ and is in conflict with  $f_5, f_6, f_{10}, f_{11}$
 and $  f_{12}$,
and $f_7$ is bad for $B_5$  and is in conflict with
  $f_8$ and $f_{9}$.
    The vertex
ordering is  $v_0, B_1,  v_1, B_2, v_2, v_3,  B_3,  B'_4, B_6, B_5,
B''_4, v_4, v_5, v_6, v_7, v_8, B_7, v_9, v_{10}, v_{11}, B_{8}$,
$v_{12}, v_{13}$ with $B_4=B'_4\, b_4 \, B_4''$, where $b_4$ is the leader of
  $B_5$ and $B_6$.
  }
  \label{fig:2-level-graph}
\end{figure}

A planar 2-\emph{level (multi-) graph}   is the subgraph  induced by
a cycle $C$ of level $\ell$ vertices, called \emph{outer vertices},
and of the level $\ell+1$ vertices in its interior, called
\emph{inner vertices}, see Figure~\ref{fig:2-level-graph}. The
subgraph $I$ in the interior  is composed of \emph{blocks}. A block
is the cycle of outer vertices of a 2-connected component. It will be
the outer cycle at the next level.
It may consist of an
edge with its vertices or of a single vertex, which is called an
 \emph{elementary block}. Two blocks may
share a vertex, which is a \emph{cutvertex} of $I$. These vertices
are distinguished as the leader of blocks. A connected component of
$I$ is called a \emph{block-tree}. It is a cactus consisting of
blocks with branches at cutvertices.
%, and a distinguished block, called the \emph{root}.
%  We denote the block tree that contains a block (or vertex) $B$ by $\mathcal{T}_B$.
% where $B$ is omitted if it clear from the context.
  Two block trees are separated by  chords between outer vertices or a face
  that can contain  such a chord. For example, the
  frame of the graph in Figure~\ref{fig:Wconf} has three
  block trees, each consisting of a single quadrangle.
 By  Lemma~\ref{lem:2-connected}, we can assume that planar 2-level multigraphs are
  biconnected. \\

  Yannakakis \cite{y-epg4p-89} has simplified the problem of embedding a
  planar 2-level graph into a 3-page book by the assumption that the
  graph is triangulated and that the inner subgraph is connected.
 Connectivity  can be achieved at the expense of planarity.
There are outer chords if a planar 2-level graph is triangulated and
the inner subgraph is not connected. Then two connected components
can be connected by an additional edge, which crosses the outer chords
that separate them.

For our book embedding of planar 2-level multigraphs, we follow the
block oriented description by Yannakakis \cite{y-epg4p-89}, see also
  \cite{%abk-bt1pg-15,
bbkr-book1p-17}. The one by Bekos et
al.~\cite{bdggmr-benpsf-20} can be regarded as face oriented. The
edges of a planar 2-level multigraph are \emph{outer edges} on the
outer cycle $C$, \emph{outer chords} between non-consecutive outer
vertices in the interior of $C$, \emph{binding edges} between inner
and outer vertices (in this direction), that are classified into
\emph{forward} and \emph{backward binding}, and \emph{inner edges}
between two vertices that are consecutive for a block. There are no
\emph{inner chords} between two non-consecutive inner vertices and
no copies of inner edges. Such edges are flipped into the interior
of a block and are considered at the next level.

The \emph{faces} of a planar 2-level multigraph are the faces
between $C$ and $I$ in the interior of $C$. The outer face and the
faces in the interior of   blocks are discarded. Each face contains
at least one outer vertex. It may  contain one or two outer chords.
Faces may contain vertices and edges from
blocks in several block-trees. A face has $r$ binding edges, where
$r\geq 0$ is even by the alternation between outer and inner
vertices.

Each block $B$ has a least vertex $\lambda(B)$, called the
\emph{leader}, which is the cutvertex of $B$ and its parent in $I$
if $B$ is not the root  in a block tree.  For
$B=b_0,\ldots, b_q$ with $q \geq 0$  let $b_0=\lambda(B)$ and traverse $B$
in ccw-order.
A cutvertex may be the leader of several blocks,
that are % \emph{siblings} in $\mathcal{T}$ and are
ordered clockwise like the outer cycle. The leader of a block plays
a special role, see also \cite{y-epg4p-89}.
 Any inner vertex is in a single block, except if
it is a cutvertex or the first vertex of a block tree.
 The \emph{root}  of a block tree is special.
A block tree $\mathcal{T}$ has a \emph{first face}
$f_{\mathcal{T}}$, which is the least face containing a vertex of
$\mathcal{T}$. The least outer vertex in this face is denoted
by $\alpha(\mathcal{T})$, and is also called the first outer vertex of $\mathcal{T}$.
By 2-connectivity, $f_{\mathcal{T}}$
has a \emph{last binding edge} $\edge{a_0}{v_s}$ between a vertex $a_0$ of
any  block
of  $\mathcal{T}$ and an outer vertex $v_s$. Vertex $v_s$ is
searched by a ccw-traversal of $f_{\mathcal{T}}$ from its least
outer vertex.   Vertex $a_0$ is set to be the leader of the root
 of $\mathcal{T}$ and is called the \emph{first vertex} of
$\mathcal{T}$, denoted $\lambda(\mathcal{T})$. Vertex
$v_s$ is called  the \emph{last outer vertex} of $\mathcal{T}$,
denoted $\omega(\mathcal{T})$.
Hence, the first outer face $f_{\mathcal{T}}$ of any block-tree $\mathcal{T}$
contains the vertices $\alpha(\mathcal{T}), \lambda(\mathcal{T})$
and $\omega(\mathcal{T})$. Observe that a face may contain
the root of several block-trees and vertices   $\alpha(\mathcal{T}_i)$
and $\lambda(\mathcal{T}_i)$ for $i=1,\ldots, r$ and $r \geq 0$,
that all have the same first outer vertex. However,
any face contains the first vertex $\lambda(\mathcal{T})$ of at most
one block-tree, since there is the edge
$\edge{\lambda(\mathcal{T})}{\omega(\mathcal{T})}$ and another binding edge
between a vertex of $\mathcal{T}$ and an outer vertex by biconnectivity,
so that there is a closed curve through $\omega(\mathcal{T})$
and $\lambda(\mathcal{T})$, that separated $\mathcal{T}$ from the remainder of $H$.
A face may contain the first and last outer vertex of several block-trees, since
we allow multi-edges.

\begin{figure}[t]
\centering
\subfigure[ ] {    %0.35\textwidth}
    \includegraphics[scale=0.8]{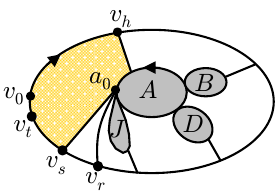}  %  framed-Fig1
      \label{fig:root-a}
  }
  \hspace{2mm}
\subfigure[ ] {    %0.35\textwidth}
    \includegraphics[scale=0.8]{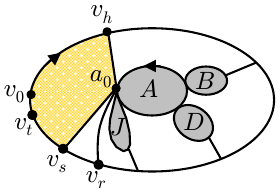}
      \label{fig:root-b}
  }
 \caption{
  The first vertex $a_0$ and the
   root  of a block tree $\mathcal{T}$ with blocks $A, B, D, J$
   in this order and last outer vertex $v_r$.
   (a) Block $A$ including $a_0$ is dominated by $v_0$.
   (b) There is an elementary block $a_0$ with dominator  $v_0$.
  Block $A$ is dominated by $v_h$. Vertex $a_0$ is the leader of $A$.
  }
  \label{fig:firstblock}
\end{figure}

Any inner vertex is in a single block except if its is a cutvertex or the
first vertex of a block-free. For uniqueness,
 we assign each vertex to the block that is closest to the
root in its block tree, and we denote the set of vertices
 assigned to $B$ by $V(B)$, see \cite{y-epg4p-89}.
Hence, $b_0 \not\in V(B)$ if
 $B= b_0, b_1,\ldots, b_q$  with $q \geq 0$,  in general.
Vertices   $b_1$  and    the \emph{first} and $b_q$   the \emph{last} vertex of $B$
and edges  $\edge{b_0}{b_1}$  and $\edge{b_0}{b_q}$  are the \emph{first} and
the \emph{last} edge of $B$, respectively.
The least outer vertex in the face containing  the last edge $\edge{b_0}{b_p}$ of block $B$
 is called the \emph{dominator} of $B$, denoted $\alpha(B)$ if $B$ has at least two vertices.
Vertex $\alpha(B)$ \emph{sees} $B$ according to     \cite{y-epg4p-89}.
 Similarly, there is a \emph{last outer vertex}
$\omega(B)$, which is the  least outer vertex in the face containing
the first edge $\edge{b_0}{b_1}$.
Note that there may be edges $\edge{b_0}{v}$ with outer vertices $v$  such that
  $v < \alpha(B)$ and  $v > \omega(B)$, respectively.
If $B$ is the root of block-tree  $\mathcal{T}$, then its first outer vertex
$\alpha(\mathcal{T})$ does not necessarily see $B$. Then the first
vertex of $\mathcal{T}$, that is $\lambda(\mathcal{T})$, is an
\emph{elementary block} and the root of $\mathcal{T}$.
Recall that $\lambda(\mathcal{T})$ is connected to the last outer vertex
of $\mathcal{T}$ by an edge.

Yannakakis \cite{y-epg4p-89} assumes
 that $\alpha(B) < V(B) < \omega(B)$ for any block $B$.
We need a generalization, since there are inner separation pairs and multi-edges.
We say that  block $B$ is \emph{covered} by the outer vertex $v$ if
$v=\alpha(B)=\omega(B)$. By biconnectivity, if
there is a binding edge between a vertex of $B$
and an outer vertex if $B$ is an extreme block of $\mathcal{T}$, that is it
has no parent. Hence,
 any binding edge is incident to $v$ if $B$ is covered by $B$ and the binding
is incident to a vertex of $V(B)$. In consequence, there is a close
relationship between covered blocks and separation pairs.\\

 The following observation by Yannakakis \cite{y-epg4p-89}, also
stated in \cite{bbkr-book1p-17},  describes the
structure of the inner subgraph.

\begin{lemma}     \label{lem:partition}
Let $H$ be a 2-connected planar 2-level multigraph with outer cycle
$C = v_0,\ldots, v_t$. Then the
following hold.
\begin{enumerate}[(i)]
  \item  Any block $B$ has a first and a last outer vertex
$\alpha(B)$ and $\omega(B)$, such that $\alpha(B)\leq \omega(B)$.
  \item  Block $B$
   is in an inner component at an inner separation pair
$\seppair{\lambda(B)}{\omega(B)}$   if
and only if   $\alpha(B)=\omega(B)$, that is $B$ is covered by $v$.
  \item If $B$ is an uncovered block with leader $b_0$, $\alpha(B)=v_i$, $\omega(B)=v_j$
  and $v_i \neq v_j$, then $H-\{v_i, b_0,v_j\}$ partitions into a right part
$H_1$ and a left part $H_2$, such that $H_2$ contains the vertices $v_{i+1},
\ldots, v_{j-1}$ and the vertices of $B$.  $H_1$ is the other part.
$H- \{v_i, v_j\}$ partitions similarly
with the vertices $v_{i+1},
\ldots, v_{j-1}$ in the left part $H_2$ if $\edge{v_i}{v_j}$ is an
outer chord.
\end{enumerate}
\end{lemma}

\begin{proof}
Every face has an outer vertex and thus a first and last  outer
vertex. The vertices $\alpha(B)$ and $\omega(B)$  of block $B$ are
in the face containing the last and the first  edge of $B$,
respectively, if $B$ is nonelementary. We have
$\alpha(B) \leq \omega(B)$ since the outer cycle and blocks are
traversed in opposite directions.
If $B$ is elementary, then $\alpha(B)$ and $\omega(B)$ are
taken from the first face of the block-tree containing $B$, such
that $\alpha(B)<\omega(B)$.

For (ii), if $v= \alpha(B)=\omega(B)$ for some block $B$, then there is
a single outer vertex  that can see all vertices and edges of $B$.
There are faces with $v$  and the first and last edge of $B$, respectively.
Hence, $\seppair{\lambda(B)}{v}$ is a separation pair.
  Conversely, $v=\alpha(B)=\omega(B)$ if
  $\seppair{b}{v}$ is an inner separation pair such that $b$ is the
  leader of $B$ and $B$ is in an inner component.

For (iii),
if $B$ is not the root of a  block tree, then its leader
 $b_0$ is a cutvertex of the inner subgraph  that is
partitioned by the removal of $b_0$. Similarly, the outer cycle $C$
is partitioned by the removal of $v_i=\alpha(B)$ and
$v_j=\omega(B)$. There is a  curve $\Gamma$ from $v_i$ via $b_0$ to
$v_j$ that partitions the planar drawing of $H$. The curve first follows a binding edge
incident to $v_i$, which must exist  since $v_i=\alpha(B)$. Then it
goes along the boundary of the face containing $v_i$ and $b_0$,
which is the boundary of blocks.  It passes $b_0$ and then follows
the boundary of the face containing $b_0$ and $v_j$. The boundary
consists of edges from blocks and a final binding edge incident to
$v_j$, which exists, since $v_j=\omega(B)$.  We close $\Gamma)$
in the outer face.
 There is a shortcut for $\Gamma$ using edges
  $\edge{b_0}{v_i}$ and
$\edge{b_0}{v_j}$ that can be added uncrossed in the respective
faces of $H$. Now part $H_2$ is in the interior of $\Gamma$ and $H_1$ is outside.
  Similarly, there is a partition of
$H-\{v_i, v_j\}$  if $B$ is a root of a block tree or if
$\edge{v_i}{v_j}$ is an outer chord, which proves (iii).
\qed
\end{proof}

We say that a partition as in Lemma~\ref{lem:partition}(iii) is
\emph{induced} by an uncovered block $B$ or   an outer chord
$\edge{v_i}{v_j}$.
An inner separation pair  $\seppair{b}{v}$   is \emph{maximal} if the inner
vertex $b$ is in a block that is not covered by $v$.  Then there is
no inner vertex $a$ such that $\seppair{a}{v}$ is a separation pair whose inner
components contain the inner components of $\seppair{b}{v}$.
The inner components form a block-subtree with leader $b$.
The set of blocks of the inner components a maximal inner separation pair is called
a   \emph{super-block} and is denotedvby  $B^+$. The cutvertex $b$ is the
leader of $B^+$ and is not assigned to it, similar to blocks.
We order the blocks either
clockwise  or counterclockwise at $b$, depending on the later use.
Then the boundary of $B^+$ is traversed in clockwise  or counterclockwise order.
The first and last edge of $B^+$ are defined accordingly.
Clearly, any block of $B^+$ is covered by $v$, so that
$v=\alpha(B^+)=\omega(B^+)$, that is any super-block is \emph{covered}.
Also, any binding edge incident to a vertex of $B^+$  is incident to $v$.
There is not necessarily a
binding edge $\edge{c}{v}$ if $c$ is a cut-vertex in $B^+$ or $c=b$.
However, there is a binding multi-edge
$\edge{c}{v}$ if faces are triangulated  from $v$.

\subsection{Vertex Ordering} \label{sect:vertex-ordering}

By the peeling technique,  the \emph{vertex ordering} or \emph{linear layout}  of
 a planar  multigraph $G$  is composed of the vertex
ordering of its 2-level subgraphs.
 Yannakakis \cite{y-epg4p-89} has proposed two methods for the traversal of blocks
of a planar 2-level graph $H$. Suppose the outer cycle is traversed in cw-order.
In the \emph{consecutive method}, any block is traversed
individually in counterclockwise order from its leader,
 such that its vertices are consecutive at this moment, except for the cutvertex,
that is in the parent block, in general.
 Blocks in the same block-tree and with the same dominator
are visited in cw-order, whereas block-trees are visited in ccw-order.
  In the \emph{nested method},  the set of uncovered blocks of  a block-tree
with the same dominator is traversed by depth-first search \cite{clrs-ia-01}. Each
  vertex is listed exactly once at its first appearance.
The traversed blocks   form a block-subtree.
The nested method partitions the set of vertices of
a block into many segments between two cutvertices for children, such
hat each  segment can be assigned to an interval
in the vertex ordering. In particular, there is an interval for the vertices of any
block-subtree.

 We use the consecutive method at uncovered blocks
and super-blocks, as it admits a simpler description, whereas the nested method
must be used at block-expansions, as it   treats a set of blocks like a single one.
If $B$ is an uncovered block with dominator $v_i$, then traverse $B$ in
ccw-order from its leader and place it to the right of $v_i$. Blocks of the same
block-tree are ordered clockwise at $v_i$ if they are dominated by $v_i$.
If $B^+$ is a super-block at a maximal inner separation pair $\seppair{a_i}{v}$
such that $v < \omega(\mathcal{T})$,
  use copies of $v$ after $v$ such that each block of $B^+$ is dominated by a
copy of its own.   Then lay out the vertices of
the blocks of $B^+$ as before and remove the copies of $v$
(or keep them as placeholders, which are isolated vertices).
%
\iffalse   %% nested + ccw  approach
with $a_i$ in some uncovered   block $A$
and let $\mathcal{T}$ be the block-tree containing $B^+$.
Then the boundary of $B^+$  is traversed by depth-first search  in clockwise order
if $v < \omega(\mathcal{T})$ and counter-clockwise, otherwise.
Thus any block of $B^+$ is traversed clockwise (ccw-order) and also blocks are ordered
clockwise (ccw-order) at cutvertices.
 Denote the obtained vertex ordering
 by $\sigma(B^+)$ if ccw-order is used and by $\sigma^R(B^+)$ for the
clockwise order. In any case, the  the leader $b$ is excluded.
If $v < \omega(\mathcal{T})$, then place $\sigma^R(B^+)$ just right of $v$.
Vertex $a_j$ is called the leader of $B^+$ and edges
$\edge{a_j}{b_1}$ and $\edge{a_j}{b_q}$ are the first and last edges
if $B^+ = b_1,\ldots, b_q$.
\fi   %% nested + ccw  approach

The \emph{block-expansion} of an uncovered  block
or super-block $A$  at its vertex $a_i$ by a super-block
$B^+$ is obtained traversing the boundary of $B^+$ in postorder \cite{clrs-ia-01},
that is any block is traversed in ccw-order, blocks with the same cutvertex
are visited in ccw-order, and the cutvertex is listed last, after the vertices
of the blocks of the block-subtree. The obtained sequence of vertices is
inserted right before $a_i$.  %% nested + ccw  + postorder
The   \emph{block-expansion} of $A$ is obtained by expanding it
at any of its vertices by super-blocks that are covered by
the last outer vertex $\omega(\mathcal{T})$, and
 is denoted by $A^*$. The block-expansion of a
planar 2-level multigraph $H$  is obtained by expanding all uncovered blocks
   in  all block-trees.

%A  block-expansion  of (super-) block $A$  \emph{at vertex} $a_j$ traverses the
%boundary of  the block-subtree $B^+$ of $\mathcal{T}$ with cut-vertex $a_i$
%in counterclockwise order. This is  opposite to the direction as  for
%block-trees that are covered by vertices $v \neq \omega(\mathcal{T})$.
The boundary of an expanded block
no longer a simple cycle, which does not matter for our further investigations.
Note that there are one or  two edges incident to $a_i$ from the first block
of any inner component at $\seppair{a_i}{v}$, such that there are several edges
between $a_i$ and vertices  of $B^+$.
In fact, there is a similarity between a block-expansion
and  an elementary root.
We treat an expanded block like an ordinary one and
let $\alpha(A^*)= \alpha(A), \omega(A^*)= \omega(A)$ and
$\lambda(A^*)= \lambda(A)$,
where $\omega(A)= \omega(\mathcal{T})$.
We use block-expansions to capture the case of ``small faces'' in
\cite{bdggmr-benpsf-20,bdggmr-benpsf-20-a, bdggmr-benpsf-22-b}.
It leads to a new vertex ordering.

For the  \emph{vertex ordering} $L(H)$ of a planar 2-level multi-graph $H$,
we first compute all super-blocks and all  block-expansions.
Blocks that are contained in any super-block $B^+$ or block-expansion
$A^*$ are discarded for a moment.
The remaining blocks,  super-blocks, and block-expansions,
 are called \emph{X-blocks}.
Now we use Yannakakis \cite{y-epg4p-89} consecutive method for X-blocks.
We obtain the vertex ordering as in \cite{y-epg4p-89} with the consecutive method
if  there are no covered blocks.
As a remainder, choose a vertex $v_0$ in
the outer face of $G$, which is set to be the least outer vertex.
Then traverse the outer cycle $C=v_0,\ldots, v_t$ from $v_0$ in
clockwise order (cw-order), such that the vertex ordering is $v_0 < \ldots
< v_t$. Blocks and expanded blocks are traversed counterclockwise (ccw-order).
 The roles of cw-order and ccw-order switch form level to level.
 By induction on the levels, let $C$ be the outer cycle of a planar 2-level
 multigraph $H$. For any outer vertex $v_i$, place the X-blocks dominated
by $v_i$ just right of $v_i$, where $X$-blocks from the same block-tree
are ordered clockwise and X-blocks from different block-trees are
ordered counterclockwise at $v_i$. In addition, if the dominator of a block
 is the leader of the outer cycle, that is $v_0$,  then place
the vertices immediately to the left of $v_1$, as in \cite{y-epg4p-89}.

The ordering of $C$ implies an ordering for the inner vertices,
blocks, X-blocks, block-trees, outer chords, and faces of $H$ and an orientation of the
edges according to the ordering of its vertices. Each face $f$ has
outer vertices and thus a \emph{first outer vertex} $\alpha(f)$  and
a \emph{last outer vertex} $\omega(f)$, which are the least and last
outer vertex in the boundary of $f$. Clearly, $\alpha(f)=\omega(f)$ is possible.
 Faces are ordered
according to their first outer vertex and in ccw-order if faces have
the same first outer vertex. For the computation of the
vertex ordering, a triangulation of each face from its first outer
vertex may be helpful, as the ordering of X-blocks
that are dominates by $v_i$
coincides with the ordering of the incident
edges and triangulation  edges, where there may be more multi-edges,
 for example, if there
  are inner separation pairs.  As an example, consider Figure~\ref{fig:2-level-graph}.

\begin{lemma} \label{lem:ordering}
Let $H$ be a 2-connected planar 2-level multigraph with outer cycle
$C = v_0,\ldots, v_t$.
If $H$ is partitioned into parts $H_1$ and $H_2$ induced by  an
uncovered  X-block
$B$ with $v_i= \alpha(B) \neq  \omega(B)=v_j$
or by an outer chord $\edge{v_i}{v_j}$,  as described in Lemma~\ref{lem:partition}, then the
vertex ordering satisfies  \hspace{5mm} $V_l < v_i < U  < V_2    < v_j   <  V_r$,  \hspace{5mm}
where $V_2$ is the set of vertices of part $H_2$, $U$ is the
set of vertices of X-blocks dominated by $v_i$ in part $H_1$, $V_l$
is the set of vertices of X-blocks dominated by vertices $v < v_i$
and $V_r$ is the set of vertices of X-blocks dominated by vertices $v \geq  v_j$.

If  $B$  is a covered X-block, then it is   a super-block that is covered by
some outer vertex $v_i$ if  $v_i < \omega(\mathcal{T})$. Now
%and the leader of $B$ is in a block that is not covered by $v_i$,
  $L(H)$ satisfies
$V_l < v_i < U  < V_2  <  V_r$, where $V_l, U$ and $V_r$ are
as before,  and $V_2$ is the set of vertices of $B$.
\end{lemma}

\begin{proof}
The statement extends  Lemmas 1 and 2 in \cite{y-epg4p-89}, which
prove the partition   of the set of vertices and the vertex ordering
 in case of   a connected inner subgraph and no block
expansions, see also \cite{bbkr-book1p-17}.

We proceed by induction on the dominators of X-blocks and
the first vertex of an  outer chord.
%First suppose that there is a block $B$ such that $v_i \neq v_j$.
All outer vertices $v < v_i$ and the X-blocks dominated by $v$ precede
$v_i$ in the vertex ordering. This includes vertices from blocks that are merged
into another block by a block-expansion. Similarly, all outer vertices
$v' > v_j$ and the X-blocks dominated by $v'$ succeed $v_j$.
So vertices of blocks that are covered by the last outer vertex of a block-tree
may move to $V_l$.
\iffalse  % WHLG?
By
Lemma~\ref{lem:partition}, these parts are in $H_1$. This is clear
for a partition by an outer chord $\edge{v_i}{v_j}$. If there is a
block $B$ and a triple $(v_i,b_0,v_j)=(\alpha(B), \lambda(B),
\omega(B))$, then there is a curve $\Gamma$ from $v_i$ via $b_0$ to
$v_j$, as used in Lemma~\ref{lem:partition}. There is a shortcut
$\Gamma'$ with the edges $\edge{v_i}{b_0}$ and $\edge{b_0}{v_j}$,
that can (temporarily) be added to $H$ and are routed in the faces
containing $v_i$ and $b_0$ and $b_0$ and $v_j$, respectively, such
that they are uncrossed.
%  If $v_i=v_j$, then take the first and the last face containing $v_i$ and $b_0$.
%
 Every outer vertex $v<v_i$ and $v > v_j$ is to the right of
$\Gamma'$, that is in part $H_1$. If block $D$ is dominated by some
$v<v_i$, then there is a face containing $v_i$  and the last edge of
$D$ (or the first vertex of a block tree if $D$ is elementary),
which precedes the face containing $v_i$ and $b_0$, so that $D$ is
in $H_1$. The case for blocks dominated by vertices $v > v_j$ is
similar.
 Hence, $V(H_1)$ partitions into sets $V_l, V_r$ and $L$ as claimed.
Similarly, all outer vertices $v$ with $v_i < v < v_j$ and all
blocks dominated by $v$ are in part $H_2$.
\fi % WHLG?
 Consider the X-blocks dominated by $v_i$.
 If   $B_1$ and $B_2$  are dominated by $v_i$ such that  $B_1 \in H_1$ and $B_2 \in H_2$,
then $B_1$ precedes $B_2$ in the vertex ordering, since $B_1$ is in the
  right and $B_2$ is in the left part. In fact, if $B_1$
and $B_2$ are in the same block tree, then there is a path from
$B_2$ to $B_1$ in their block tree, as shown by Yannakakis
\cite{y-epg4p-89} if there are no blocks that are covered by $v_i$.
  If $B_1$ and $B_2$ are in different
block trees, then the block trees are separated by an
outer chord or a face that can contain an outer chord in its interior. Now vertex $v_i$
dominates the root of the block tree $\mathcal{T}_2$ containing
$B_2$, such that $v_i < w$ for any vertex $w$ in $\mathcal{T}_2$.
% By assumption (and the block expansion), there are no blocks such that
% $v_i=\alpha(B)= \omega(\mathcal{T}_B)$.
 The block tree $\mathcal{T}_1$ containing $B_1$ precedes
 $\mathcal{T}_2$, such that blocks from $\mathcal{T}_1$ precede those
 from $\mathcal{T}_2$ if they are dominated by $v_i$. Hence,
 $V_l < v_i < U < V_2 < v_j$, where  $V_2 < v_j $ is clear from
\cite{y-epg4p-89}.  The case
with an outer chord $\edge{v_i}{v_j}$ is similar.
Any block $B$ that is dominated by the last outer vertex $\omega(\mathcal{T})$
of its block-tree $\mathcal{T}$
is covered by $\omega(\mathcal{T})$,
since $\omega(B) \leq \omega(\mathcal{T})$. It is
a priori merged into an extended block.
Hence, $\omega(\mathcal{T})$ does not dominate X-blocks of $\mathcal{T}$ .
The blocks from an inner component  at a separation pair $\seppair{a}{v_i}$
are merged into a single super-block $B^+$ that is placed to the right of $v_i$
and to the left of $v_{i+1}$.  There is an interval exactly for
the vertices of $B^+$. By the vertex ordering at $v_i$, X-blocks
are ordered clockwise at $v_i$ if they are dominated by $v_i$, such that $U$
contains the vertices of all X-blocks that precede $B^+$.
the inner components are ordered like block-trees and the vertices from any single
inner component are placed to the right of $v_i$ and to the left of $v_{i+1}$ .
Hence, the stated properties hold.
  \qed
\end{proof}

We now return to the original set of blocks.
If $B$ is an uncovered block with $\omega(B) = v_j$ or
$e=\edge{v_i}{v_j}$ is an outer chord, then  there are no
vertices of part $H_2$ to the right of $v_j$. If $B$ is covered by $v_i$,
then the vertices of $B$ are in an interval that is exclusive for $B$.
The interval is placed
between  $v_i$ and   $ v_{i+1}$ if $v_i \neq \omega(\mathcal{T})$ for the
block-tree containing $B$ and to the left of
the cutvertex $a_i$ of block $A$ if $v_i = \omega(\mathcal{T})$,
such that there is a block-expansion for $A$. In the latter case, $B$ is
part of a super-block that is traversed in postorder, similar to   the nested method.

\subsection{Embedding of Edges}  \label{sect:frame-edges}

Yannakakis~\cite{y-epg4p-89}  has used three pages for the embedding of
planar 2-level graphs. All outer edges, all outer chords and all
backward binding edges (between vertices of a block and its
dominator) are embedded in page $P_1$. The inner edges of a block,
in particular, the first and the last edge, are embedded in a single
page $P_2$ or $P_3$, and the forward binding edges are embedded in
the other page. These pages alternate between a block and its
parent, that is at an odd and an even distance between a block and
the root of its block-tree. We adopt this embedding after a triangulation
from the first outer vertex of each face.  Any triangulation  edge
is a crossed edge of the given maximal  2-level  framed multigraph.
We obtain a planar 2-level multi-graph $H^+$
with a set edges $E^+$ that includes the edges of $H$.
For any face $f$, let $E^-(f)$ be the set of \emph{remaining crossed edges}
and let  $E^- = \cup_f \,E^-(f)$.

Clearly, the vertex ordering $L(H)$ coincides with the vertex ordering
of $L(H^+)$, since vertex $v$ dominates a block in $L(H)$
if and only if $v$ is the least outer vertex in a face containing
$v$ and the last edge of  a non-elementary block if and only if
there is a triangle in $H^+$ with $v$ and the last edge of the block, and
similarly for the first face of a block-tree or any block-expansion.
First, we show that the edges of $H^+$ can be  embedded in
three pages using our vertex ordering $L(H)$.
Then one vertex per face is done, since its incident edges
are embedded. All these edges  can be disregarded subsequently,
which is important, in particular for Lemma~\ref{lem:noconflict-ret}.
Hence, only $k-1$
vertices remain for a $k$-face. If face $f$ contains vertices of a
covered block, then it contains a single outer vertex, such that
 $V^-(f)$ consists only of inner vertices. This even simplifies the situation.
 There is a
face-conflict graph that represents a possible conflict between two
remaining edges of any two faces. We show that the face-conflict graph
is outerplanar, such that it is 3-colorable. Moreover, it  represents
conflicts, such that there is no crossing of remaining edges from
two faces with the same color. The arguments are similar to the planar case.
In total, the edges of a 2-level $k$-framed
multigraph can be embedded in $3 \lfloor (k-1)/2 \rfloor  +3$ pages.

\begin{lemma} \label{lem:edges-firstoutervertex}
Any triangulated  planar 2-level $k$-framed  multigraph $H^+$
 can be embedded in three pages if the vertex ordering
$L(H)$   is used.
\end{lemma}
\begin{proof}
Yannakakis \cite{y-epg4p-89} has proved that all outer edges, all outer
chords and all backward binding edges of a triangulated planar
2-level graph can be embedded in page $P_1$.
The edges of block $B$ are embedded in page $P_2$ and forward
binding edges incident vertices of $B$ in page $P_3$.
Pages $P_2$ and $P_3$ alternate for $A$ if block $A$ is the parent of $B$.
Yannakakis   excludes outer chords and covered blocks.
The extension is proved in the same way using X-blocks.

If $B^+$ is a super-block that is covered by some
vertex $v_i < \omega(\mathcal{T})$ for the block-tree $\mathcal{T}$
containing $B^+$, then its vertices are placed in an
interval  $[B^+]$ to the right of $v_i$ that exclusively contains the vertices of $B^+$.
For any vertex $b$ in $B^+$,  $\edge{b}{v}$ is a backward binding edge
and is embedded in page $P_1$. The edges form a fan at $v_i$, such that they
do not twist mutually. They do not twist other edges in $P_1$,
since all vertices between $v_i$ and  (the left boundary of) $[B^+]$  are from X-blocks that
are dominated by $v_i$. If X-block
$A^+$ contains the leader   of $B^+$, then embed
the inner edges of the blocks of $B^+$   in page  $P_3$ if the forward binding
edges incident to vertices of $A^+$ are embedded in $P_3$.

Suppose the uncovered block $A$ is expanded at its vertex $a_i$ by $B^+$.
 Then there are forward binding edges $\edge{a_i}{\omega(\mathcal{T})}$
and $\edge{b}{\omega(\mathcal{T})}$ for any vertex $b$ in $B^+$,
  since $\omega(\mathcal{T})$ is the last outer vertex
that may be incident to such binding edges.
These edges are  embedded in
pages $P_2$ or $P_3$,  opposite to the page for
 the inner edges of $A$  \cite{y-epg4p-89}.
The inner edges of $B^+$ are embedded in the same page as  the inner edges of
$A$, since the nested method is used.
 The vertices of $B^+$
are immediately to the left of $a_i$, and  there  are no other vertices
in $[B^+, a_i]$, such that binding  edges with a vertex
in $[B^+, a_i]$ do not twist mutually. In particular, if $a_i$
is a cut-vertex and the leader of blocks that succeed $A$,
then the edges incident $a_i$ and these blocks can be embedded in the
page opposite to the the page used for the inner edges of $A$.
If $A$ is expanded at vertices $a_i$ and $a_j$ with $a_i<a_j$, then all binding edges
incident to a vertex $a$ of $A$ with $a_i< a < a_j$ are incident to
$\omega(\mathcal{T})$. They are all embedded in the same page.
If blocks $A$ and $B$ are expanded at vertices $a_i$ and $b_j$,  respectively,
then the binding edges are embedded by the above rule.
Suppose that blocks $A$ and $B$ are uncovered such that $B$ is the first child of
$A=a_0,\ldots, a_p$ in ccw-order. Then vertex
 $a_m$ with $0\leq m \leq q$ is the leader
of $B$ and $a_m$ is minimal. Now all vertices with an edge
incident to a vertex of $B$ are in the interval $[a_m,\omega(\mathcal{T})]$.
Block $A$ is expanded only at vertices $a_i$ with $a_0 \leq a_i \leq a_m$.
If $e$ is an edge from a block-expansion at $A$, then $e=\edge{u}{\omega(\mathcal{T})}$
with $u \leq a_m]$, whereas the vertices of any edge $e\rq{}$ incident to a vertex of $B$
is in   $[a_m,\omega(\mathcal{T})$, such that $e$ and $e\rq{}$
can be embedded in the same page.
 By induction, we obtain that
   the edges from all block-expansions, all uncovered blocks
and all super-blocks %  in any block-tree $\mathcal{T}$
can be embedded in pages $P_2$ and $P_3$ without creating a conflict.
 Hence, all edges of $H^+$ can be embedded in three pages.
  \qed
\end{proof}
%
\iffalse  %% old
In addition,
  there is a path of blocks and super-blocks
$A_1, \ldots, A_s$
with dominators $v < \omega(\mathcal{T})$ if all blocks are discarded
that are covered
by $\omega(\mathcal{T})$ . Block $A_1$ is the root of
$\mathcal{T}$ and there is an edge $\edge{a_0}{\mathcal{T}}$,
since the first vertex of a block-tree is connected to the last outer vertex
by an edge. Then there is \emph{front-side} of $A_1, \ldots, A_s$
containing the sequence of vertices of each $A_i$ from its leader to the
cutvertex with $A_{i+1}$. By the triangulation, any vertex of the front side
is connected to $\omega(\mathcal{T})$ by a forward binding edge. Formally, this is
proved by induction using Lemmas~\ref{lem:partition} and \ref{lem:ordering}.
If $\seppair{b}{\omega(\mathcal{T})}$ is an inner separation pair, then vertex
$b$ is on the front-side, and $\edge{b}{\omega(\mathcal{T})}$ is forward binding.
 If block $A$ is expanded at $a_i$ by $B^+$, then there is a
forward binding multi-edge $\edge{a_i}{\omega(\mathcal{T})}$
by the triangulation that is embedded in page $P_2$.
The vertices of $B^+$ are placed immediately to the left of $a_i$.
The edges $\edge{b}{\omega(\mathcal{T})}$ form a fan at $\omega(\mathcal{T})$
and are near edge $\edge{a_i}{\omega(\mathcal{T})}$, such that
these edges do not twist mutually if they are embedded in page $P_2$.
Since there is an interval for the vertices of $B^+$ and any binding edge
$\edge{b}{v}$ with $b$ in $B^+$ is incident to $v=\omega(\mathcal{T})$,
$\edge{b}{v}$  does not twist any other edge embedded in $P_2$.
%
\fi  % old

We now consider the sets of remaining edges $E^-(f)$ of the faces.
There is
no need to distinguish covered and uncovered blocks, since the edges
incident to the dominator are in the set $E_{\alpha}$.
Hence, all backward binding edges between a block and its dominator
are disregarded. In particular, if $e$ is  remaining edge with a vertex in
a covered block, then both vertices  are  inner vertices,
that are in the interior of specified interval.

\begin{definition} \label{def:bad}
A face $f$ of   a planar 2-level  multigraph is called   \emph{bad} for
block $B$ if $f$ has degree at least four   and

\noindent (i) $B$ is non-elementary and $f$ contains the last
edge of $B$  in its boundary and  (a) either $B$ is uncovered or
(b) $B$ is a covered block in a super-block  or

\noindent (ii)(a)   $B$ is a covered block that is merged into an expanded block
and  $f$ also contains the last edge of $B$ and the successor of the leader of
$B$ in its block or

\noindent (b) $B$ is  an elementary root of  a block-tree $\mathcal{T}$ and
$f$ is   the first face of  $\mathcal{T}$ that  also contains
an outer vertex $v$ with $\alpha(\mathcal{T}) < v < \omega(\mathcal{T})$
or an inner vertex.

  A face is \emph{bad} if it is bad for any block $B$, and \emph{good},
otherwise.
\end{definition}

\iffalse

\begin{figure}[t]
\centering
    \includegraphics[scale=0.95]{badface}  %  frame2
      \label{fig:badface}
 \caption{
NEU  NEU  NEU}
\end{figure}

\fi

For an  example, see Figure~\ref{fig:2-level-graph}.
 In particular, face $f$ is good if it does not contain any cutvertex in its boundary.
Clearly, a face can be bad for several blocks, namely if  they have  the same dominator.
In return, there are blocks without a bad face, for example, if there is a triangle
containing the last edge of a block.
As any non-elementary block has a last edge and any elementary one a first face,
then following is clear.

\begin{lemma}\label{lem:onebadedge}
For any   block  $B$ of a planar 2-level multigraph
% including cutvertices at block-expansions
there is at most one face $f$ such that $f$ is bad for $B$.
\end{lemma}

\iffalse %  revision   unnoetig
\begin{proof}
Every uncovered and non-elementary block has a last edge, which is
contained in the bad face of the block. There is a unique face
containing the first and last outer vertex for an elementary block.
Hence, every uncovered block has a bad face. A face can contain the
last edge of several blocks and several elementary blocks, so that
it is bad for several uncovered blocks. Every (covered) block has a
first edge, which is necessary for a bad face if the block is
covered. Let $B_1,\ldots, B_k$ be covered blocks with the same
leader $b_0$, such that the blocks are ordered clockwise. The blocks
may have different vertices on top if $b_0$ is the bottom.
 Then the last face containing $b_0$, its predecessor, a top vertex, and
vertices from $B_1$ is bad for $B_i$, $i=1,\ldots, k$. Otherwise, if
the last face is bounded by a binding edge incident to $b_0$, then
the blocks $B_i$ for $i=1,\ldots, k$ do not have a  bad face.
% \qed
\end{proof}

\fi  % revision   unnoetig

Hence, the number of bad faces is bounded by  the number of blocks.\\

Next we define a conflict between two faces via bad faces, and then
prove in Lemma~\ref{lem:noconflict-ret} that there is no conflict
between two remaining edges if the faces are not in conflict.
Note that our notion of conflict is different from the one in \cite{bdggmr-benpsf-20,bdggmr-benpsf-20-a, bdggmr-benpsf-22-b},
since we disregard all edges that are incident to the first outer vertex in any face.
This  restriction is important.

The last edge $\edge{b_0}{b_q}$ and any forward binding edge incident
to a vertex $b_i$ for $0<i<q$ of block $B=b_0,\ldots, b_q$ and $q \geq 2$
twist, as observed by Yannakakis \cite{y-epg4p-89}. Similarly,
crossed edges incident to $b_0, b_q$ and $b_i$ may twist, such that they
shall be embedded in different sets of pages.
Also the vertex of an elementary block and the cutvertex
of a bock-subtree at a block-expansion behave in the same way, since they are
spanned by crossed edges between vertices from the bad face.
For a block-tree  $\mathcal{T}$, e say that face $f$ is \emph{on the front side}   if
$v=\omega(\mathcal{T})$ for any outer vertex of $f$
 and $f$ contains an inner vertex from any   block of $\mathcal{T}$.
  The faces on the front side are ordered clockwise at $\omega(\mathcal{T})$,
the last of which contains the first vertex of $\mathcal{T}$.
 Let $A$ be an uncovered block in block-tree $\mathcal{T}$, such that $A$
is expanded at its vertex $a_i$ by some super-block $B^+$.
Let $x$ be any cutvertex of $B$ or $x=a_i$.
Then the vertices of the block-subtree of $B^+$ with leader (cutvertex) $x$
are in an interval $[B^+, x]$ immediately to the left of $x$.
Hence, the remaining vertices from any face with a vertex in the  block-subtree
are in this interval, except for the first and last face  in ccw-order at
$\omega(\mathcal{T})$. The first face is bad if it contains vertices that span $x$.
The last face contains $x$, the first vertex of the last sibling at $x$ in cw-order,
and probably the predecessor of $x$ in its block.
There are at most $2m$ faces in the front side with $x$ in their boundary if
the block-subtree has $m$ siblings at $x$.
Any of these faces may contain a crossed edge $e$  that is incident to $x$ such that
$e$ twists
any edge from the bad face that spans $x$.
If block $A$ is expanded to $A^*$, then all vertices from the block-expansion at
its vertex $a_i$ are in the interval $[a_{i-1}, a_i]$.
Hence, any remaining edge $e$ from the bad face of $A$ does not twist any remaining edge
from a face in the front side with a vertex in the expansion at a vertex of $A$,
since the vertices of $e$ are outside $[a_{i-1}, a_i]$.
%Hence, we (can) restrict conflicts between faces.

% , see  Figure~\ref{fig:2-level-graph}.

\begin{definition} \label{def:conflict}
  Two faces $f$ and $f'$ of a  planar 2-level multigraph $H$ are
in \emph{conflict} if  (i) $f$ is bad  for a non-elementary  block $B$
that is uncovered or in a super-block  and  $f'$ contains a vertex of $B$
except if $f'$ is on the front side
 or (ii) $f$ is bad for a covered block or an elementary block
and $f'$ contains the leader of $B$.

  The \emph{face-conflict graph} $H^{\times}$
has the  faces of $H$ as its vertices. There is an edge
$\edge{f}{f'}$ in $H^{\times}$ if   $f$ and $f'$ are in conflict.
\end{definition}

%%%  NEU  FIG ist unnoetig oder muss geaendert werden
\iffalse

\begin{figure}[t]
\centering
  \subfigure[ ] {    %0.35\textwidth}
    \includegraphics[scale=0.8]{Fig-mapsbook-graph}
%      \label{badanchoredfaces}
  }
  \hspace{1mm}
  \subfigure [ ] {  %{0.55\textwidth}
    \includegraphics[scale=0.8]{Fig-pages}
 %     \label{badanchoredfacesembedding}
  }
  \caption{
  Illustration to Lemmas~\ref{lem:independent-faces} and \ref{lem:face-no-conflict}
and Theorem~\ref{thm:embed-cliques}.
  Faces $f_1, f_2, f_4, f_5, f_6$ are good, such that their $R$-edges are in intervals that
  nest properly.
  Faces $f_A$ and $f_B$ are bad for blocks $A$ and $B$, respectively, and $f_3$ is bad for
  blocks $X$ and $Y$. Face $f_A$ is in
  conflict with $f_B, f_6$ and $f_5$,  face $f_B$ is in conflict with $f_1,\ldots, f_5$
  and face $f_3$ is in conflict with $f_4$.
  There are triangles $(f_A, f_B, f_5)$ and $(f_B, f_3, f_4)$ in the face-conflict graph.
  % Note that there is a multi-edge incident to $v_r$ and enclosing  blocks $X$ and $Y$.
  }
  \label{fig:bookmaps}
\end{figure}

Case (i) can be restricted to faces $f'$ that contain at least two
outer vertices or vertices from at least two blocks. Otherwise, the
inner vertices of $f'$  form an interval so that remaining edges
from $f'$ cannot twist an edge from any other face.

\fi

The following is obvious.

\begin{lemma}\label{lem:faces-block}
Two faces  contain vertices of a single block including its leader if they are in conflict.
\end{lemma}

\begin{lemma}\label{lem:faces-block-twist}
Any two   edges $e \in E^-(f)$ and $e' \in E^-(f')$  do not twist
if faces  $f \neq f'$ do not both contain  vertices assigned to any block $B$.
%  if $B$ is not merged into an expanded block.
\end{lemma}
\begin{proof}
First, assume that $f$ and $f'$ do not contain inner vertices of the same block-tree.
Then there is a partition of $H$ induced by an outer chord $\edge{v_i}{v_j}$,
that may be added by a  triangulation, as described in Lemma~\ref{lem:partition},
such that all vertices of $V^-(f)$ are in part $H_2$ and all vertices of
$V^-(f')$ are in part $H_1$. As shown in Lemma~\ref{lem:ordering}, the vertices of
$V^-(f)$ are in the interval $[v , v_j]$ with $v_i \leq v$
and those of $V^-(f)$ are outside, or vice versa.
Note that $v_i$ is disregarded if it is the first outer vertex of $f$ or $f\rq{}$.
Clearly, $e$ and $e\rq{}$ do not twist in this case.

 Next, assume that $B>B\rq{}$ in a block-tree $\mathcal{T}$. If $B$ is uncovered, then
 there is a partition
of $H$ induced by $B$ such that part $H_2$ contains $B$  and $B\rq{}$ is in
part $H_1$. As before, the vertices of $V^-(f)$ are in the interval $[v , v_j]$,
where $v_j=\omega(B)$ for the least block $B$ with vertices in $V^-(f)$,
and those of $V^-(f\rq{})$ are outside.
Similarly, if $B$ is covered by some outer vertex $v \neq  \omega(\mathcal{T})$,
then the vertices of $V(B)$ are in an interval to the right of the interval
for the vertices of $B\rq{}$. Similar to the case of forward binding edges,
the interval for the vertices of $f\rq{}$ is contained in the interval
$[\lambda(B), B]$ if $B\rq{}$ is dominated by some vertex $v\rq{} \leq v$
and $\lambda(B\rq{}) <  \lambda(B)$ are assigned to the same block,
the interval  for $V^-(f\rq{})$ precedes the one for $ V^-(f)$ if $f\rq{}<f$
and it includes the interval for $V^-(f)$ if $f\rq{} > f$.

At last, if $B$ is merged into an  expanded block $A^*$, then its vertices
are in an interval $[a_{j-1}, a_j]$, where $a_j$ is the cut-vertex in an
uncovered block $A$, such that it is disjoint from the interval for the vertices of
$B\rq{}$ and $f\rq{}$ or properly nests within that interval if $B\rq{}$
is not merged into $A^*$, too.  Now the vertices of $B$ are in an interval
to the left of $\lambda(B)$ that is either disjoint from the interval for the
vertices of $B\rq{}$, that is to the left of $\lambda(B\rq{})$, or it is
a subinterval.

Hence, in any case, there are disjoint or intervals
 for the remaining
vertices of $f$ and $f\rq{}$ or one is a subinterval of the other
, such that edges $e$ and $e\rq{}$ cannot twist.
 \qed
\end{proof}

% older attempt

Hence, if  $e$ and $e'$ twist and are remaining edges in two faces $f$ and $f'$,
then the  faces are \emph{close}.
In particular, edges cannot twist if their faces are separated
by an outer chord of the frame. Clearly, a face  may contain vertices
from  different block-trees and
it may be bad for several blocks, namely if the blocks have the same
dominator, and it may contain the root of several block-trees the leader of
several blocks that are merged into an expanded block, as
Figure~\ref{fig:2-level-graph} illustrates.

\begin{figure}[t]
\centering
    \includegraphics[scale=1.0]{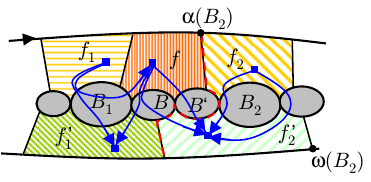}   % Framebook
  \caption{
  Illustration for the proof of Lemma~\ref{lem:outerplanar}.
  A face-conflict graph (with multi-edges). Face $f_1$ is bad for $B$ and $B'$.
  Edges $\edge{f_1}{f'_1}$ and $\edge{f_2}{f'_2}$ are separated by a
  curve $\Gamma$ (red and dashed).
  }
  \label{fig:conflict-graph}
\end{figure}

\begin{lemma}\label{lem:outerplanar}
  The  face-conflict graph of a planar 2-level multigraph is outerplanar.
\end{lemma}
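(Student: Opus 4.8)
The plan is to produce an explicit outerplanar drawing of the 2-level face-conflict graph, realised as a non-crossing arc diagram over the faces laid out on a line. First I would handle the trivial part: by Lemma~\ref{lem:anchored-face} and the definition of a conflict, triangular faces of $\mathcal{P}(H)$ and faces that are attached to a block and contain no other block are isolated vertices, so they can be appended anywhere on the outer boundary at the end; only faces of degree at least four that are bad, or that touch a block with a bad face, need to be placed carefully.

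Next I would fix the linear order $f_1 < f_2 < \cdots < f_m$ of the remaining faces --- the clockwise order in which the algorithm visits them along the outer cycle $C$ (by first outer vertex $\alpha(\cdot)$, and counter-clockwise among faces sharing a first outer vertex) --- draw the $f_i$ as points on a horizontal line in this order, and draw each conflict edge as an arc in the upper half-plane. Since the line, closed up through the point at infinity, bounds a single region incident to all of $f_1,\dots,f_m$, it suffices to show that these arcs are pairwise non-crossing: then the arc diagram is an outerplanar embedding and we are done. The key structural claim is that the conflict edges incident to a fixed bad face $f_\omega(B)$, viewed through its badness for a block $B$, form a \emph{star} centred at $f_\omega(B)$ whose leaves are the faces containing a vertex of $B$, and that --- by Lemma~\ref{lem:sector-mapgraph} applied with $v_i=\alpha(B)$, $v_j=\omega(B)$ and the triple $(\alpha(B),\lambda(B),\omega(B))$, or with the enclosing multi-edge when $B$ sits inside one --- this entire star occupies one contiguous interval $I_B$ of the face order that ends at $f_\omega(B)$. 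A star is crossing-free by itself, so the whole proof reduces to comparing the stars of two distinct bad blocks $B$ and $B'$: if $B$ and $B'$ are incomparable in the block-cut forest or lie on opposite sides of an outer chord, then Lemma~\ref{lem:independent-faces} (via the partition into $H_l$ and $H_r$) makes $I_B$ and $I_{B'}$ interior-disjoint, and if, say, $B'$ is a descendant of $B$, then iterating $\alpha(A)\le\alpha(B)$ and $\omega(A)\ge\omega(B)$ along parent edges gives $I_{B'}\subseteq I_B$ with $f_\omega(B')$ no later than $f_\omega(B)$, so that the $B'$-star nests inside the $B$-star rather than interleaving with it. Conflicts arising at a separation pair with an inner vertex are absorbed by the placeholder convention together with Lemma~\ref{lem:faces-from-separationpairs}, which keep the faces touching inner components inside the relevant interval; Lemma~\ref{lem:onebadedge} guarantees that each block produces at most one such star, and Lemma~\ref{lem:good-faces} covers the remaining pairs of good faces.

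I expect the delicate step to be the descendant case: showing that a face touching $B$ cannot ``interrupt'' the interval $I_{B'}$ of a nested block $B'$ without also touching $B'$, i.e., that the $B'$-star genuinely nests within the $B$-star. This is precisely where the fine structure of Yannakakis's vertex ordering enters --- consecutive placement of the vertices of a block immediately after its dominator, dominators shared by several blocks, the temporary connectors, and multi-edges at separation pairs --- so that is the point I would write out in full detail. Everything else is the routine bookkeeping of non-crossing arcs; once the non-crossing claim is in place, all faces lie on a common face of the drawing, and hence the 2-level face-conflict graph is outerplanar.
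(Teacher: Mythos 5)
Your overall strategy---lay the faces out on a line in the clockwise face order, draw each conflict edge as an arc, and show that no two arcs interleave, i.e.\ exhibit a one-page layout of the face-conflict graph---would indeed prove outerplanarity, and it differs from the paper, which instead draws the conflict graph inside the planar structure of $H$: it takes the dual with the outer face and the faces inside non-trivial blocks removed (so all face-vertices lie on the outer face), routes each conflict edge $\edge{f}{f'}$ from the bad face $f$ through the block $B$ it is bad for (only edges at $f$ may enter $B$, by Lemma~\ref{lem:onebadedge}), and separates the edge bundles of distinct bad faces by a curve through $\alpha(B_2)$, $\lambda(B_2)$, $\omega(B_2)$. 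The problem is that your write-up stops exactly where the content of the lemma lies. The claim that the stars of two bad faces never interleave in the linear face order is asserted and explicitly deferred (``that is the point I would write out in full detail''), and the tools you cite do not yield it: the monotonicity $\alpha(A)\le\alpha(B)$, $\omega(A)\ge\omega(B)$ concerns the vertex ordering, whereas the conflict partners of a bad face are the faces containing a vertex of its block, which do not form an $\alpha/\omega$-interval in the face order---faces of descendant blocks, attached faces and triangles are interspersed in the same range---and Lemmas~\ref{lem:sector-mapgraph}, \ref{lem:independent-faces} and \ref{lem:faces-from-separationpairs} speak about conflicts of crossed edges, not about the relative positions of conflict arcs. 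So the proposal has a genuine gap: the non-interleaving claim, which is essentially the whole lemma, is not proved.

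A further sign that the deferred step is not routine is that you have the orientation of the stars backwards. A face that is bad for $B$ contains the last edge of $B$, but in the clockwise face order it is the \emph{first} face containing vertices of $B$ (its first outer vertex is the dominator $\alpha(B)$); the paper orients conflict edges away from bad faces and uses $f<f'$ for every conflict edge $\edge{f}{f'}$. Hence the star of $B$ begins, rather than ends, at its bad face, and for a descendant block $B'$ of $B$ the bad face of $B'$ comes later, not earlier (in Fig.~\ref{fig:bookmaps}, $f_A<f_B$ although $B$ hangs below $A$). Whether the descendant's star nests inside the ancestor's or interleaves with it is exactly a statement about these orderings, so the slip matters. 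To complete your route you must show directly that for conflict edges $\edge{f_1}{f'_1}$ and $\edge{f_2}{f'_2}$ with $f_1<f_2$ bad for distinct blocks one never has $f_1<f_2<f'_1<f'_2$; the paper does the corresponding work geometrically, by induction along the blocks on a curve from $\alpha(B_2)$ through $\lambda(B_2)$ to $\omega(B_2)$ that keeps the two edges on opposite sides, and an honest linear-order proof needs an argument of the same strength, including the cases of shared dominators, connectors, multi-edges, and separation pairs with an inner vertex.
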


\begin{proof}
  The face-conflict graph $H^{\times}$ is a subgraph of the planar dual   $H$ from which
  the   outer face
  and the faces inside non-trivial blocks are removed.
  Then all  faces of $H^{\times}$ are in the outer face, since each face has
at least one outer vertex.
 A face of $H^{\times}$ is a cutvertex if it has an outer
chord in its boundary or may contain an outer chord in its interior.
It is isolated if it has no  inner vertices.

  Orient the edges of $H^{\times}$  away from  bad faces.
There are two cases.
In case (i), if face $f$ is bad for a non-elementary block $B$ that is uncovered
or in a super-block, then route an edge $\edge{f}{f'}$
 from $f$ through   $B$ to $f'$, such that any two edges
  incident to $f$, do not cross, and similarly for $f'$. Edge $\edge{f}{f'}$ enters $B$
  through the last edge $\edge{b_0}{b_q}$
  of $B=b_0,\ldots, b_q$.
%  and leaves it through any vertex of  $B$ in the boundary of $f'$.
  Block $B$ is entered only by edges incident
  to $f$, since $B$ has at most one  bad face. Block $B$ is left through any
  vertex $b$ assigned to $B$, except if $b$ is on the front side, that
is all faces with $b$ in their boundary are on the front side. Also
the leader of $B$ is excluded.
There may be multi-edges $\edge{f}{f'}$ if $f$ is bad for several blocks.
 Multiple copies can be removed.
In case (ii),  if $B$ is the elementary root of its block-tree $\mathcal{T}$,
or $B$ is merged into an expanded block $A^*$ and $f$ is bad for $B$,
 then route an edge $\edge{f}{f'}$ from $f$
through $b_0$ to $f'$ if $f'$ contains $b_0$ in its boundary.

  Consider two edges $\edge{f_1}{f'_1}$ and $\edge{f_2}{f'_2}$ of
  the  face-conflict graph, as illustrated in Figure~\ref{fig:conflict-graph}.
Let $f_i$ be bad for $B_i$ for $i=1,2$,  such the $B_i$ is the least
such block. First, assume that case (i) holds for both blocks.
  If the edges are adjacent, then they do not cross. In particular,
  if $f'_1=f_2$, then $B_1$ is the
  parent of $B_2$ if $B_1$ and $B_2$ are in the same block tree,
   such that $\lambda(B_2)$ is a vertex of $B_1$. Then
 $\edge{f_1}{f'_1}$ enters $f'_1$ through $\lambda(B_2)$ and $\edge{f_2}{f'_2}$
 enters
 $B_2$ through the last edge of $B_2$, such that they do not cross.
  Otherwise, assume $f_1 < f_2$.
  Then $B_1 < B_2$, since faces and blocks are ordered clockwise.
  Consider a
  curve $\Gamma$ from $\alpha(B_2)$ through $\lambda(B_2)$ to $\omega(B'_2)$,
  where $\omega(B'_2)$ is the last outer vertex in a face with vertices of $B_2$
  in its boundary. Then $\omega(B'_2)=\omega(B_2)$ if there is a
binding edge $\edge{\lambda(B_2)}{\omega(B_2)}$ in $H$. Otherwise,
$\omega(B'_2)$ is the last outer vertex after $\omega(B_2)$ in a
face that sees the first edge of $B_2$, such that $\omega(B'_2) \geq
\omega(B_2)$.
Route $\Gamma$ such that it first follows the binding edge incident
to $\alpha(B_2)$ to some block $D_1$  and then it follows the blocks
$D_1,\ldots, D_r$ on the side of the dominator $\alpha(B_2)$ to the
leader $\lambda(B_2)$ in $D_r$. There is no binding edge between
$\alpha(B_2)$ and any vertex of $D_i$ for $i>1$. Next, $\Gamma$
follows the inner boundary of $f'_2$, that is  blocks $D'_1,\ldots,
D'_s$ up to the binding edge incident to $\omega(B'_2)$ in the
boundary of $f'_2$. Then $D_r=D'_1$ and $D_i \neq D'_j$,   since
$B_2$ is the least block. Curve $\Gamma$ is routed along uncrossed
edges of the frame. It is completed to a closed curve by a part of
the outer cycle between $\alpha(B_2)$ and $\omega(B'_2)$. The faces
$f_1, f'_1$ and $f_2, f'_2$ are on opposite sides of $\Gamma$, such
that the edges $\edge{f_1}{f'_1}$ and $\edge{f_2}{f'_2}$ are on
opposite sides of $\Gamma$. Hence, the edges cannot cross.

Next, suppose that case (ii) holds for both blocks.
Then the bad face for $f_i$ contains  the leader   of $B_i$ is its boundary, $i=1,2$.
%%% Edge     $\edge{f_i}{f'_i}$ does not enter the interior of any block.
There is at most one bad face next to $b_0$, such that $b_0$ is not
passed by other edges of the conflict graph.
If $B_1$ and $B_2$ are in different block-trees or in block-expansions
such that $B_2$ is not in the block-subtree
with root $B_1$, then $\edge{f_1}{f'_1}$  and $\edge{f_2}{f'_2}$
are separated as follows.
There is  an outer chord or a binding edge incident to $\omega(\mathcal{T})$
in $H$, that separates $B_1$ and $B_2$ in $H$ as described
in Lemma~\ref{lem:partition}. Then edges $\edge{f_1}{f'_1}$  and $\edge{f_2}{f'_2}$
can meet in a common face, for example $f'_1 =f_2$, but they
cannot cross in the dual.
If $B_1$ is an ancestor of $B_2$ in a block-subtree that is part of a
block-expansion, then $f_1 <  f'_1 \leq f'_2 \leq f_2$ or
$f'_1 < f_2 \leq  f'_1 < f'_2$, such that the edges do not cross.

Consider the mixed case.
If $f_1$ is the first face of block-tree $\mathcal{T}_1$,   then
the blocks with vertices in the boundary of
$\edge{f_1}{f'_1}$  and $\edge{f_2}{f'_2}$
are separated by an outer chord if $B_2$ is not in $\mathcal{T}_1$
and by a curve through $\alpha(B_2) , \lambda(B_2), \omega(B_2)$,
as described in Lemma~\ref{lem:partition},
such that
  $\edge{f_1}{f'_1}$  and $\edge{f_2}{f'_2}$
can meet but do not cross.
In particular, if the leader of $B_2$ is the first vertex of $\mathcal{T}_1$,
that is $B_1$, then there is a triangle $(f_1, f_2, f'_2)$, where
$f'_1=f_2$ or $f'_1=f'_2$.

At last suppose that block $B_2$ is merged into an expanded block $A^*$.
If block $B_1$ is not in $A^*$, then   $\edge{f_1}{f'_1}$  and $\edge{f_2}{f'_2}$
can be separated as described before.
Let $B_1=A$ be the uncovered block that is expanded to $A^*$
and suppose that the leader of $B_2$ is vertex $a_i$ of $A$. All other cases are similar.
Then vertex $a_i$ is on the front side, such that it blocks   any edge
from the bad face of $A$. In fact, vertices $a_0,\ldots, a_m$ are blocked if they
are on the front side and $a_m$ the first vertex of $A$ that is the leader
of a block that is not covered by $\omega(\mathcal{T})$ or $a_m$ is the least
vertex of $A$ in the boundary of the bad face for $B_2$, in which case
we have $f'_1=f_2$. Otherwise,
 $\edge{f_1}{f'_1}$  and $\edge{f_2}{f'_2}$
can be separated as described before.

Hence, any two edges of the conflict graph do not cross, so that
$H^{\times}$ is outerplanar.
    \qed
\end{proof}

The next lemma includes all edges of a face. If can be restricted to the remaining
set of edges if the faces have degree at least five.

\begin{lemma} \label{lem:noconflict-ret}
There are edges $e \in E(f)$ and $e' \in E(f')$ that twist twist  if
faces $f\neq f'$ are in conflict.
\end{lemma}

\begin{proof}
Assume that $f$ is bad for block $B$. If $B=b_0,\ldots, b_q$ is
 non-elementary and is not merged into
an expanded block, then $f'$ contains a vertex $b$ of $B$ and some outer vertex $v'$
such that $\edge{b}{v'}$ is a forward binding edge. Let $u$ be a forth
vertex in $f$. Now edges $\edge{b_0}{b_q}$ and $\edge{b}{v'}$ twist.
Similarly, if $B$ is elementary or is merged into an expanded block $A^*$, then
$\lambda(B)$ is spanned by a crossed edge $\edge{a_j}{a_{j+1}}$ of $f$,
whereas the further inner vertices of $f'$ are to the left of $a_j$.
\qed
\end{proof}

The converse of Lemma~\ref{lem:noconflict-ret} is true when restricted to
the remaining edges. It completes the   correctness proof for  our algorithm.
 For the proof, we use the computed vertex ordering $L(H)$ and the fact
that edges incident to the first  outer vertex of each face are
excluded.  It resembles the case for the last edge of a block and forward and
backward binding edges incident to its vertices from Lemma~\ref{lem:edges-firstoutervertex}.

\begin{lemma} \label{lem:noconflict-ret}
Edges $e \in E^-(f)$ and $e' \in E^-(f')$ do not twist  if  faces
$f\neq f'$ are not in conflict.
\end{lemma}

\begin{figure}[t]
\centering
    \includegraphics[scale=0.8]{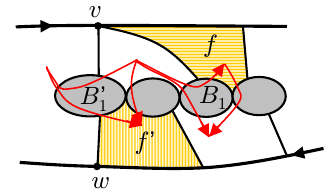}  %  Fig-framed
 \caption{
Illustration to the proof of Lemma~\ref{lem:noconflict-ret}. Faces
$f$ and $f'$ are  not in conflict and they do not
contain vertices of the same block.
  }
  \label{fig:conflict}
\end{figure}

\begin{proof}
By Lemma~\ref{lem:faces-block-twist}, $e$ and $e'$  do not twist if faces $f$ and $f'$
do not share vertices of a block $B$. If all shared  blocks are covered, then  the
remaining vertices of at least one of $f$ and $f'$ are inner vertices, which
simplifies  the situation.
As in the proof of Lemma~\ref{lem:outerplanar}, we must distinguish between
the cases from Definition~\ref{def:bad}.
Let $f < f'$ and assume that block $B$ is non-elementary and is
  uncovered or in a super-block. Then $f$ does not contain the last edge of $B$.
If $b_f$ is the least vertex of $B$ in $f$ and   $v_f$  its last outer vertex,
then
  the remaining vertices of $f$ are in the interval $[b_f, v_f]$, where $v_f$ is
the last inner vertex of $f$ if it has a single outer vertex.
Then the outer vertices of $f$ precede those of  $f'$ if $f< f'$.
The inner vertices of block $B$ in $f'$ precede those of $f$ and
inner vertices in  $f'$ succeed the last vertex of $B$, since they are
in blocks $B' > B$. Hence, edge $e$ does not twist $e'$.
Note that face $f\rq{}$ may be the first face for a block-tree with inner
vertices from block $B$ that is in a different (earlier) block-tree.

\iffalse  %%
If $f'$ is on the front side, then $B$ is expanded. Now the vertices of $f'$
are in an interval $[b_{i-1}, b_i$ whereas the vertices of $f$ are outside.
They may share $b_i$ if $b_i$ is the last vertex of $B$ on the front side.
Hence, $e$ and $e'$ do not twist.
\fi %%

If $f$ is the first face of a block-tree $\mathcal{T}$ and $f$ and $f\rq{}$ contain
vertices of the root of $\mathcal{T}$, then the first face is not bad for $B$,
for example it is a triangle or contains only outer vertices $v \geq \omega(\mathcal{T})$
besides the first outer vertex of $\mathcal{T}$.
Then  $\mathcal{T}$ has an elementary root  $b_0$, the vertices of $V^-(f\rq{})$
are in an interval $[b_0,\omega(\mathcal{T})$ and the vertices of $V^-(f)$
are outside this interval.

At last, assume that $B$ is covered by $\omega(\mathcal{T})$, such that it is merged
into an expanded block $A^*$. Then $V^-(f)$ contains only inner vertices from
blocks that are merged into $A^*$. Then the vertices of  $V^-(f\rq{})$ are
in an interval that excludes the leader $\lambda(B)$ if $f$ is bad for $B$,
such that the intervals for  $V^-(f)$ and  $V^-(f\rq{})$ are disjoint,
or if $\lambda(B) \in  V^-(f) \cup V^-(f\rq{})$, then the intervals for
$V^-(f)$ and $V^-(f\rq{})$ share vertex $\lambda(B)$ and $[V^-(f\rq{})]$
is a subinterval of  $[V^-(f)]$ if $f < f\rq{}$. All other cases are similar.

Hence, edge $e$ does not span exactly one vertex of $e\rq{}$ or vice versa,
such that $e$ and $e\rq{}$ do not twist.
\qed
\end{proof}

%%%  gekuerter beweis

%%%  todo  for Journal version

We comprise the above Lemmas to the following result:

\begin{theorem}\label{thm:embed-2-levelgraph}
Any  2-level   framed multigraph $H$  can be embedded in $3 K+3$ pages
if the set of remaining edges $E^-(f)$ of  every face can be embedded
in $K$ pages using the vertex ordering $L(H)$. In particular, $H$
can be embedded in $3\lfloor k/2 \rfloor$+3 pages if $H$ is
$k$-framed.
\end{theorem}

\begin{proof}
Graph $H$ consists of a planar $k$-framed multigraph  and of sets of
crossed edges for the faces. The sets from the triangulated planar multigraph
  can be embedded in three
pages, as shown in Lemma~\ref{lem:edges-firstoutervertex}.
Then the edges of the frame and the edges incident to the first outer
vertex of each face are done.
 By
Lemma~\ref{lem:noconflict-ret}, edges from different sets $E^-(f)$
and $E^-(f')$ can be embedded in the same set of pages if $f$ and
$f'$ are not in conflict, that is     $f$ and $f'$ have the same color
in the face-conflict graph. By Lemma~\ref{lem:outerplanar}, the
face-conflict graph is outerplanar, such that it is 3-colorable. Then
three sets of $K$ pages each suffice for $E^-$. Since any edge of
$H$ is  in   $E_{\alpha}$ or $E^-$ (or both),
 all edges of $H$ are embedded in $3K+3$ pages.
 If $H$ is $k$-framed, then $K \leq \lceil (k-1)/2 \rceil$.
   \qed
\end{proof}

\begin{corollary}  \label{cor:good-faces}
The sets of crossed edges from all good faces
of a 2-level  $k$-framed multigraph
of a can be embedded in $\lceil (k-1)/2 \rceil+3$ pages.
\end{corollary}
\begin{proof}
 The face-conflict graph is discrete if all bad faces are discarded.
Then one color suffices.
   \qed
\end{proof}

\subsection{Composition} \label{sec:main}

 As observed by Yannanakis~\cite{y-epg4p-89}, the vertices of a
 block at level $\ell+1$
including the leader  are placed between two consecutive level
$\ell-1$ vertices. Here it is assumed that  vertices from blocks are
placed just right of the second outer vertex if the blocks are
dominated by the first outer vertex. The assumption is adopted from
\cite{y-epg4p-89} and has no side effects for the embedding of edges
at any level.     The  vertices of a block at level
$\ell$ are consecutive in $L(H)$, whereas there is an interval
 between $a_j$ and $a_{j+1}$
if there is a block-expansion at $a_j$. The vertices in the interval
are incident to other vertices in the interval,
  to $a_k$ and by binding edges to $\omega(\mathcal{T})$
if $\mathcal{T}$ is the block containing (the block of) $a_j$.
 Hence, it does not matter that the
vertices of a block at level $\ell+1$ are not consecutive in $L(H)$,
similar to the nested method.
In consequence, the same set of pages can be reused for all odd
(even) levels, such that twice the number of pages for 2-level
graphs suffices for a book embedding of framed multigraphs.

The page for the inner edges of block $B$ at level $\ell$ can be reused
for the embedding of
 the backward binding edges and chords in its interior at the next level
$\ell+1$, as observed by Yannakakis \cite{y-epg4p-89}
for his 5-page algorithm.

\begin{theorem}\label{thm:main}
  Any $k$-framed multigraph  can be embedded
  in $6 \lfloor k/2 \rfloor +5$ pages.
  The book embedding can be computed in linear  time in the size
  of $G$ (number of vertices and edges).
\end{theorem}
\begin{proof}
The planar frame is recursively decomposed into 2-level graphs,
which are used to compute the linear ordering. Every 2-level
subgraph of a framed multigraph has a book embedding in $3 \lfloor
k/2 \rfloor+3$ pages, as shown in Theorem~\ref{thm:embed-2-levelgraph}.
These pages can be reused for all odd levels, and another set of $3
\lfloor k/2 \rfloor+2$ pages is used for the even levels.
There are no edges between any two vertices at levels $i$ and $j$
with $|i-j| \geq 2$ in the planar frame. Since crossed edges are in the interior of faces
of the frame, this also holds for the edges of a framed multigraph.
One edge is saved, as described before  \cite{y-epg4p-89}.

 Concerning the running time, the frame of a framed
  multigraph with $n$ vertices has at most $3n-6$ edges including
   multiplicities for multi-edges,
since there are vertices on either side of a 2-cycle by a multi-edge. It thus has $O(n)$ faces.
  The vertex ordering can be computed in linear time in the number
  of vertices, both for 2-level multigraphs and the frame. Similarly, the 3-coloring
  of an outerplanar 2-level face-conflict graph is computable in linear time, such that
  the coloring of all faces takes $O(n)$ time.
   Finally, any edge of $G$ can be embedded in constant time.
    \qed
\end{proof}

\section{Applications} \label{sect:application}

For any even $k$, a $n$-clique with $n= 3k/2$ can be represented
by a $k$-map \cite{cgp-mg-02}. Then three points  of degree $k$
support all adjacencies,  for example $k=4$ for $K_6$.  Hence, we obtain an improved lower
bound on the book thickness of $k$-map graphs. In improved upper bound is obtained
from Theorems~\ref{thm:map-multiframe} and \ref{thm:main}.

\begin{theorem} \label{thm:kmapbook}
The book thickness of $k$-map graphs is at most $6 \lfloor k/2
\rfloor +5$.
\end{theorem}

\begin{corollary} \label{cor: mapbook-lower}
For any $k \geq 3$, there  are $k$-map graphs (or $k$-framed
multigraphs) with book
  thickness at least $\lfloor 3k/4 \rfloor$.
\end{corollary}
\begin{proof}
 The book thickness of $K_n$ is $\lceil n/2 \rceil$ \cite{bk-btg-79}
and  $K_n$ is a $k$-map graph  for  $n \leq \lfloor 3k/2 \rfloor$
\cite{cgp-mg-02}.
  \qed
\end{proof}

Chen et al.~\cite{cgp-rh4mg-06} have observed that any triangulated
1-planar graph is a 4-map graph. The drawing of a triangulated
1-planar graph consists of triangles and quadrangles, which contain
a pair of crossed edges, as shown by Alam et al.\cite{abk-sld3c-13}.
Hence, any triconnected 1-planar graph is a  $4$-framed graph. In
general, there are W-configurations \cite{t-rdg-88} with a pair  of
crossed edges in the outer face of a component  at a separation
pair, see Figure~\ref{fig:Wconf}. Now multi-edges come into play, such
that any 2-connected 1-planar graph is a subgraph of a 4-framed
multigraph. Clearly, each   1-planar graph  can be augmented to a
2-connected 1-planar multigraph.

From Theorem~\ref{thm:kmapbook}  we  obtain a   bound  of
17 for the book thickness of 1-planar graphs, which improves
the previous bounds  of 39 in  \cite{bbkr-book1p-17} and 29 that can
be obtained from
\cite{bdggmr-benpsf-20}. We can do even better.

\begin{theorem}\label{thm:1-planar}
  Any 1-planar graph can be embedded in eleven  pages.
\end{theorem}
\begin{proof}
Any 1-planar graph is a subgraph of a 4-framed multigraph whose
faces are triangles or quadrangles and there is   a pair of crossed
edges in each quadrangle. Consider a 2-level graph.
If $f$ is a triangle, then its edges are embedded in pages $P_1, P_2$ and
$P_3$ by Lemma~\ref{lem:edges-firstoutervertex}.
If $f$ is a quadrangle, then only one crossed edge remains
for the set $V^-(f)$. By Theorem~\ref{thm:embed-2-levelgraph}, all remaining
crossed edges can be embedded in three pages. As observed in \cite{y-epg4p-89},
one page can be saved at the composition, such that eleven
 pages suffice for the book embedding of any 1-planar
graph.
 \qed
\end{proof}

The \emph{crossed cube}, as shown in Figure~\ref{fig:crossed-cube}, is a
4-framed graph, whose frame is a (planar) cube, such that their is a
pair of crossed edges in each face. It is a 1-planar graph with 8
vertices and 24 edges. It has book thickness four, since the vertex
ordering is taken from  a Hamiltonian cycle of the frame, such that
four pages suffice, and it needs four pages as shown in
\cite{bk-btg-79}.\\

Bekos et al.~\cite{bkr-optimal2-17} have characterized optimal 2-planar
graphs and have shown that edges are uncrossed or crossed twice and
edges that are crossed twice can be grouped or caged to $K_5$ if
$n$-vertex graphs are optimal and have $5n-10$ edges, see
Figure~\ref{fig:Xdodecaeder}. In consequence, if an edge is crossed
twice, then its crossing edges are incident to a common vertex and
the vertices of these three edges form $K_5$. In consequence, an
optimal 2-planar graph is a 5-map graph  \cite{b-5maps-19}, such
that it is a 5-framed graph. Bekos et al.~\cite{bdggmr-benpsf-20}
have shown that the book thickness of optimal 2-planar graphs is 23,
which can be improved.

\begin{corollary}\label{cor:1-planar}
Any clique augmented 2-planar graph, and in particular,  any
optimal 2-planar graph, can be embedded in 17 pages.
\end{corollary}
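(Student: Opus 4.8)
The plan is to reduce the statement to Theorem~\ref{thm:main} by identifying clique augmented 2-planar graphs, and in particular optimal 2-planar graphs, as a subclass of the 5-map graphs. First I would recall the structural characterization of optimal 2-planar graphs of Bekos et al.~\cite{bkr-optimal2-17}: in an optimal 2-planar drawing every edge is either uncrossed or crossed exactly twice, and the doubly crossed edges can be grouped so that any edge crossed twice, together with its two crossing edges, forms three edges sharing a common vertex whose five endpoints induce a $K_5$. This is exactly the defining property of a clique augmented 2-planar graph, so every optimal 2-planar graph is clique augmented 2-planar.

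Next I would invoke the equivalence established in \cite{b-5maps-19}: a graph is a 5-map graph if and only if it is clique augmented 2-planar. Consequently every clique augmented 2-planar graph, and hence every optimal 2-planar graph, is a 5-map graph. The witness is obtained by the same construction used for $k$-framed graphs in the discussion preceding Lemma~\ref{lem:normalform}: subdivide the skeleton edges by 2-points and insert a $d$-point into each inner $d$-face of the skeleton, where the caged $K_5$'s become the $5$-points.

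Finally, I would apply Theorem~\ref{thm:main} with $k=5$: a $5$-map graph embeds in $6\lfloor 5/2\rfloor + 5 = 17$ pages. Equivalently, one can argue directly from the building blocks of this paper: by Theorem~\ref{thm:embed-cliques} a planar-maximal $2$-level $5$-map graph embeds in $3\lfloor 5/2\rfloor + 3 = 9$ pages (its cliques are $K_5$'s, whose crossed edges need $\lfloor 5/2\rfloor = 2$ pages each, and the three color classes of the outerplanar $2$-level face-conflict graph give $6$ pages for all crossed edges plus $3$ for the uncrossed skeleton and triangulation edges), and then one reuses the same pages for all odd and all even levels, saving one page on the uncrossed edges via Yannakakis' five-page scheme~\cite{y-epg4p-89}, for a total of $2\cdot 9 - 1 = 17$ pages. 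This simultaneously improves the bound of $23$ of Bekos et al.~\cite{bdggmr-benpsf-20}.

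The only point that requires genuine care is the inclusion ``optimal 2-planar $\subseteq$ 5-map graphs,'' which rests entirely on the $K_5$-caging property of optimal 2-planar drawings; once that is in place the corollary is an immediate specialization of the main theorem, so I do not anticipate any substantive new obstacle beyond citing the structural results correctly.
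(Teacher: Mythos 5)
Your proposal is correct and matches the paper's own (essentially immediate) argument: optimal 2-planar graphs are clique augmented 2-planar by the characterization of Bekos et al.~\cite{bkr-optimal2-17}, clique augmented 2-planar graphs are exactly the 5-map graphs by \cite{b-5maps-19}, and Theorem~\ref{thm:main} with $k=5$ gives $6\lfloor 5/2\rfloor+5=17$ pages. Your alternative accounting via Theorem~\ref{thm:embed-cliques} ($2\cdot 9-1=17$) is just the same derivation unfolded, so no substantive difference.
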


\section{Conclusion}  \label{sec:conclusion}

We extend Yannakakis   algorithm \cite{y-epg4p-89}  on the book
embedding of planar graphs by  block expansions and generalize the
approach by Bekos et al.~\cite{bdggmr-benpsf-20} from framed graphs
to framed multigraphs.
 Multi-edges help to obtain smaller faces, which
leads to fewer pages for the book embedding. Maximal framed
multigraphs coincide with map graphs if restricted to simple graphs.
Thus we improve the upper bound  for the book thickness of $d$-map
graphs of $O(\log n)$ by Dujmovi\'{c} and Frati
\cite{df-stackqueue-18} and   $6\lceil d/2 \rceil +5 $ (claimed) by
Bekos et al.~\cite{bdggmr-benpsf-20} to $6 \lfloor d/2 \rfloor  +5$.
In particular, we show that the book thickness of
 1-planar graphs is at most eleven.

There are several other classes of beyond-planar graphs
\cite{dlm-survey-beyond-19}, such as $k$-planar, fan-planar,
fan-crossing,  1-fan-bundle, fan-crossing free, and quasi-planar
graphs, for which the book thickness has not yet been investigated
in detail. It is unlikely that they are framed multigraphs, such that
new techniques are needed for upper bounds on the book thickness.

%\bibliographystyle{abbrv}
%%%%%%%%%%%%%%%%%\bibliographystyle{plainurl}% the mandatory bibstyle
%\bibliography{book1planar20}

\end{document}